\newcommand{\four}{4}
\newcommand{\al}{\alpha}
\newcommand{\ga}{\gamma}
\newcommand{\Ga}{\Gamma}
\newcommand{\de}{\delta}
\newcommand{\la}{\lambda}
\newcommand{\La}{\Lambda}
\newcommand{\si}{\sigma}
\newcommand{\Om}{\Omega}
\newcommand{\cA}{\mathcal{A}}
\newcommand{\cF}{\mathcal{F}}
\newcommand{\cS}{\mathcal{S}}
\newcommand{\cV}{\mathcal{V}}
\newcommand{\RR}{\mathbb R}
\newcommand{\Ts}{T_{\text{st}}^{g_0}}
\newcommand{\Tsg}{T_{\text{st}}^g}
\newcommand{\ve}{\varepsilon}
\newcommand{\Length}{length}
\newcommand{\Mass}{mass}
\newcommand{\rar}{\rightarrow}
\newcommand{\vol}{\operatorname{vol}}
\newcommand{\Spec}{\operatorname{Spec}}
\theoremstyle{plain}
\newtheorem{theorem}{Theorem}[section]
\newtheorem{prop}[theorem]{Proposition}
\newtheorem{conjecture}[theorem]{Conjecture}
\theoremstyle{definition}
\newtheorem{rema}[theorem]{Remark}
\numberwithin{equation}{section}
\def\Xint#1{\mathchoice
{\XXint\displaystyle\textstyle{#1}}%
{\XXint\textstyle\scriptstyle{#1}}%
{\XXint\scriptstyle\scriptscriptstyle{#1}}%
{\XXint\scriptscriptstyle\scriptscriptstyle{#1}}%
\!\int}
\def\XXint#1#2#3{{\setbox0=\hbox{$#1{#2#3}{\int}$}
\vcenter{\hbox{$#2#3$}}\kern-.5\wd0}}
\def\dashint{\Xint-}
\title[Boundedness of effective potentials]{On the 
boundedness of effective potentials arising from string compactifications}
\author[Disconzi]{Marcelo M. Disconzi}
\address{Department of Mathematics\\
Stony Brook University\\ Stony Brook, NY 11794, USA}
\email{disconzi@math.sunysb.edu}
\author[Douglas]{Michael R. Douglas}
\address{Simons Center for Geometry and Physics\\
Stony Brook University\\ Stony Brook, NY 11794, USA \\
I.H.E.S., Le Bois-Marie, Bures-sur-Yvette, 91440 France}
\email{douglas@max2.physics.sunysb.edu}
\author[Pingali]{Vamsi Pingali}
\address{Department of Mathematics\\
Stony Brook University\\ Stony Brook, NY 11794, USA}
\email{Vamsi.Pingali@sunysb.edu}
\begin{document}

\begin{abstract}
We study effective potentials coming from compactifications of string theory.
We show that, under mild assumptions, such potentials are bounded from below in four dimensions, 
giving an affirmative answer to a conjecture proposed 
by the second author in 
\cite{Do}. We also derive some sufficient
conditions for the existence of critical points. All proofs and mathematical hypotheses
are discussed in the context of their relevance to the physics of the problem.
\end{abstract}

\maketitle

\tableofcontents

\section{Introduction}

A decade after Einstein proposed his theory of general relativity, Kaluza and Klein showed
that, by postulating an extra dimension of space, one could obtain a unified theory
of gravity and electromagnetism.  In modern terms, one takes space-time to be $5$-dimensional,
an $S^1$ fibered over observable $4$-dimensional space-time.  The $5$-dimensional metric can then be 
decomposed into a $4$-dimensional metric, a metric on $S^1$, and a one-form on $4$-dimensional space-time.
The one-form can be identified with a $U(1)$ connection, and the $5$-dimensional Einstein action becomes a 
$4$-dimensional Einstein-Maxwell action, with additional terms depending on the metric on $S^1$.

After a long period of obscurity, the ideas of Kaluza and Klein regained popularity in the 80's after the
realization that supergravity and especially superstring theory make sense in more than $4$ space-time
dimensions.  Superstring theory requires $10$-dimensional space-time \cite{Pol, GSW}, while maximally
supersymmetric supergravity and its quantum version called ``M theory'' make sense in $11$-dimensional space-time.
In both cases, one makes contact with standard $4$-dimensional physics by postulating that the 
extra dimensions form a small $n=6$ or $7$-dimensional compact manifold $M$.

A primary goal of the physics work on these compactifications is to derive an effective action in $4$ dimensions.
This is a functional of the $4$-dimensional metric and whatever additional data parametrise the extra
dimensions -- its metric, and the other fields of supergravity or superstring theory -- taken as functions on
$4$ dimensional space-time.  Critical points of this effective action (in the usual sense of a variational principle)
correspond to critical points of the original higher-dimensional supergravity or superstring action.  

The simplest case is to restrict to {\four}-dimensional maximally symmetric space-times (Minkowski, AdS, dS)
with all other fields constant in \four\ dimensions.
In this case, the effective action reduces to an effective potential, a functional of the metric and other
fields on $M$.   Physically, this potential is the energy of the {\four}-dimensional vacuum and
thus considerations of stability apply -- one is particularly interested in local minima of the effective potential,
and one has physical arguments that the effective potential is bounded below. 
This idea was turned into a precise mathematical conjecture by one of the authors in \cite{Do}, which we now describe.

We consider compactification on an $n = D - d$-dimensional compact manifold $M$ to
$d$-dimensional maximally symmetric space-time (Minkowski, AdS, dS), with $D=10$ and $d=4$ being 
the case of most interest. In the $D$-dimensional space, we consider General Relativity coupled
to matter, the latter being encoded as usual in a set (possibly empty) of field strengths $F^{(p)}$,
$p=1,\dots$ (these are curvature terms, with the standard curvature of the Yang-Mills functional being
the canonical example). After compactifying we end up with an effective potential $\cV$ which
is completely determined by quantities living in the compact $n$-dimensional manifold $M$:
\begin{gather}
 \cV = \frac{1}{2} \int_M v^2 \Big (-R_g + \frac{1}{2}\sum_{p=1}^L |F^{(p)}|_g^2 
-\frac{1}{2}T_{\text{st}} \Big )dV_g - \frac{3}{2} \int_M |\nabla_g v|_g^2 dV_g 
+ \frac{1}{2}\al \Big ( \frac{1}{G_N} - \int_M v^{2-\frac{4}{d}} dV_g \Big )
\label{effective_potential}
\end{gather}
where $G_N$ is the $d$-dimensional Newton's constant, $g$ is a metric on $M$, $v$ is a 
positive function (the so-called warp factor), $F^{(p)}$ are field strengths as mentioned before and 
$T_{\text{st}}$ is a function on $M$ which represents the non-classical sources
present in superstring theory; $\al\in\RR$ is a Lagrange multiplier, in the sense that 
its variational equation enforces a constraint (see section \ref{results} below and \cite{Do} for 
a more detailed discussion).

Exploring the physics of the problem and particular examples, one is lead to the following
\begin{conjecture} \cite{Do}: Consider a conformal class of metrics $g = e^{2\varphi}g_0$ on 
an $n$-dimensional manifold $M$; 
then the functional (\ref{effective_potential}) evaluated at its critical points $\frac{\de \cV}{\de v} =0$, considered
as a function on the space of all conformal factors $\varphi$ with fixed warped volume and volume:
\begin{gather}
\int_M v^{2-\frac{4}{d}} dV_{g} = C_1,~~~ \int_M dV_{g} = C_2
\nonumber
\end{gather}
and all $F$, is bounded below.
\label{main_conjecture}
\end{conjecture}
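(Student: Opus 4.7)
The plan is to reduce the conjecture to a uniform lower bound on the Lagrange multiplier $\al$, and then establish that bound via conformal analysis on the background metric. First, vary \eqref{effective_potential} in $v$ on the closed manifold $M$ to get the Euler-Lagrange equation
\[
3\Delta_g v + v W_g = \frac{\al}{2}\left(2-\frac{4}{d}\right) v^{1-4/d}, \qquad W_g := -R_g + \frac{1}{2}\sum_{p} |F^{(p)}|_g^2 - \frac{1}{2} T_{\text{st}},
\]
which in the principal case $d=\four$ reads $3\Delta_g v + vW_g = \al/2$. Multiplying by $v$, integrating by parts, and substituting the resulting identity $\int_M v^2 W_g\, dV_g = 3\int_M|\nabla_g v|_g^2\, dV_g + \frac{\al}{2}\int_M v\, dV_g$ back into \eqref{effective_potential} cancels the kinetic term once the warped-volume constraint $\int_M v\, dV_g = 1/G_N$ is enforced, leaving
\[
\cV\bigl|_{\delta\cV/\delta v = 0} \,=\, \frac{\al}{4 G_N}.
\]
Hence the conjecture is equivalent to a uniform lower bound on $\al$.

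Integrating the Euler-Lagrange equation against the constant function $1$ and using $\int_M \Delta_g v\, dV_g = 0$ gives $\al = (2/C_2)\int_M v W_g\, dV_g$. The $|F^{(p)}|^2$ contributions are non-negative and the $T_{\text{st}}$ contribution is bounded in absolute value by $\|T_{\text{st}}\|_\infty C_1/C_2$ under mild boundedness assumptions. The entire content of the conjecture therefore becomes an \emph{upper} bound on $\int_M R_g v\, dV_g$, uniform in $(\varphi, F)$. Writing $g = e^{2\varphi}g_0$ and inserting the standard transformation law
\[
R_g = e^{-2\varphi}\bigl(R_{g_0} - 2(n-1)\Delta_{g_0}\varphi - (n-1)(n-2)|\nabla_{g_0}\varphi|_{g_0}^2\bigr),
\]
reduces the task to estimating, after integration by parts on the $\Delta_{g_0}\varphi$ piece, a sum of three contributions on the fixed background $(M,g_0)$: $\int R_{g_0}\, v\, e^{(n-2)\varphi}\, dV_{g_0}$, a cross term $2(n-1)\int e^{(n-2)\varphi}\nabla_{g_0}v\cdot\nabla_{g_0}\varphi\, dV_{g_0}$, and the collected $|\nabla\varphi|^2$ piece $(n-1)(n-2)\int v\, e^{(n-2)\varphi}|\nabla_{g_0}\varphi|_{g_0}^2\, dV_{g_0}$. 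A Yamabe-type substitution such as $u = v^{1/2} e^{(n-2)\varphi/2}$, combined with the Sobolev inequality on $(M,g_0)$, is the natural device for recasting these weighted integrals into controllable $L^p$ quantities involving $u$.

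The main obstacle is that the $|\nabla\varphi|^2$ term above carries a manifestly positive coefficient after the integration by parts, so the naive route does not directly yield the desired upper bound; one must exploit a cancellation with the Yamabe energy of $u$. A further structural difficulty is the weight mismatch between the fixed constraints $\int v\, e^{n\varphi}dV_{g_0}=1/G_N$ and $\int e^{n\varphi}dV_{g_0}=C_2$, both involving $e^{n\varphi}$, and the problematic integrals above, which pair the smaller weight $e^{(n-2)\varphi}$ with powers of $v$, $|\nabla v|$, and $|\nabla\varphi|$. Closing the estimate will require Hölder interpolation between these norms, Sobolev embedding on $(M,g_0)$, and almost certainly a mild geometric hypothesis on the background (such as two-sided bounds on $R_{g_0}$ and $T_{\text{st}}$, and possibly positivity of the Yamabe invariant of the conformal class $[g_0]$). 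I expect this balancing---rather than any single manipulation---to be the technical core of the argument.
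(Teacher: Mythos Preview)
Your reduction to $\cV=\al/(4G_N)$ is correct and matches the paper. From that point on, however, your route diverges from the paper's and contains a genuine gap.

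You propose to lower-bound $\al$ by integrating the Euler--Lagrange equation against $1$ to obtain $\al=(2/C_2)\int_M vW_g\,dV_g$ and then to control $\int_M R_g v\,dV_g$ directly via a Yamabe-type substitution and Sobolev inequalities on $(M,g_0)$. You yourself flag the obstacle: after integrating by parts the $\Delta_{g_0}\varphi$ term, the collected $|\nabla_{g_0}\varphi|^2$ piece carries the wrong sign, and you offer no mechanism for the promised ``cancellation with the Yamabe energy of $u$.'' Without that, the estimate does not close; the weight mismatch you identify between $e^{n\varphi}$ and $e^{(n-2)\varphi}$ is real, and neither H\"older interpolation nor Sobolev embedding alone will absorb an unbounded $\int v\,e^{(n-2)\varphi}|\nabla\varphi|^2$ without an a~priori bound you do not have.

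The paper avoids this difficulty by a different device. Writing $v=\frac{|\al|}{6}u$ with $P_gu=-1$, one has $|\al|=6/(G_N\int_M u\,dV_g)$, so the task is a \emph{lower} bound on $\int_M u\,dV_g$. The key step is to \emph{divide} the equation by the positive function $u$ and integrate. This produces the term $\int_M u^{-1}\,dV_g$ on one side and, on the other, $\int_M|\nabla_g u|^2/u^2$, the curvature and source integrals, with signs arranged so that (after a Cauchy-with-$\varepsilon$ absorption of the cross term in the $g_0$ picture) one gets
\[
\int_M \frac{1}{u}\,dV_g \;\le\; \frac{1}{3}\int_M R_g\,dV_g \;+\; \text{(bounded source terms)}.
\]
Jensen's inequality for $x\mapsto 1/x$ then gives $\int_M u\,dV_g \cdot \int_M u^{-1}\,dV_g \ge (\vol_g M)^2 = C_2^2$, which converts the upper bound on $\int u^{-1}$ into the desired lower bound on $\int u$. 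The $|\nabla\varphi|^2$ sign problem never arises because one is bounding $\int u^{-1}$, not $\int R_g v$.

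Two further remarks. First, your treatment of $T_{\text{st}}$ as uniformly bounded in $g$ is too casual: in the paper $\Tsg$ transforms conformally, and controlling $\int \Tsg v\,dV_g$ (or rather $\int \Tsg\,dV_g$, after the divide-by-$u$ trick) requires specific hypotheses such as $0\le\beta\le1$, $\beta=n/2$, or the invariance condition~(\ref{transformation_orientifold}). Second, the paper does \emph{not} prove the bare conjecture: it proves it on the set $\cS_\eta$ where $\int_M R_g\,dV_g\le\eta$. Your proposal would need at least the same extra hypothesis, and you should state it explicitly rather than burying it in ``mild geometric hypotheses.''
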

\noindent
We will prove this conjecture under mathematically precise hypotheses as
theorem \ref{main_theorem}. 

There is a close analogy to the Positive Mass theorem \cite{SY2, SY3, W}
and its generalisations like the Penrose inequality \cite{Mar, Wa}.
The Positive Mass theorem states that, asymptotically flat space-times
which satisfy the constraint equations of general relativity have non-negative energy.
It has been generalised to space-times which are asymptotically anti-de Sitter with a given
cosmological constant $\Lambda<0$ in \cite{CH, Wang}, and this is very much like a lower bound $\cV\ge\Lambda$.
While the present conjecture concerns a more restricted class of space-times,
in return it does not assume any asymptotic value
for $\Lambda$, only that $d$-dimensional space-time is anti-de Sitter for some $\Lambda$.

These results also naturally
relate to problems in Conformal Geometry. In particular the sign of the Yamabe invariant of $M$ 
\cite{LP, SY1} plays a role
in our criteria for the existence of critical points (see section \ref{existence_critical}), and it would be interesting to explore
possible deeper connections between the present work and the Yamabe problem.

Throughout the paper
we try to explain the role played by our hypotheses and statements in the physics of the problem.
Additional physics background can be found in the review
\cite{Douglas:2006es}.

\section{Statement of the results\label{results}}
For the rest of this section we will assume that $d=4$.
Let us rewrite conjecture \ref{main_conjecture} 
in a form more suitable for our goals.

A positive function $v$ is a critical point $\frac{ \de \cV}{\de v} = 0$ of the 
functional $\cV$ if and only if it satisfies\footnote{Notice that here a sign convention opposite to that 
in \cite{Do} is used; see the appendix for notation and conventions.}
\begin{gather}
 \Delta_g v + (- \frac{1}{3}R_g + \frac{1}{6} F_g - \frac{1}{6} \Tsg ) v = \frac{\al}{6},~~ v>0,
\label{eq_v_al}
\end{gather}
for some real number $\alpha$, where $F_g = \sum_{p=1}^L |F^{(p)}|_g^2$. $T_{\text{st}}$ will be allowed to depend on the metric
and therefore this dependence has been written explicitly. See equation (\ref{T_transformation}) and 
the discussion that follows.

Let us first recall the sense in which the parameter $\al$ is a Lagrange multiplier. Since $d=4$, the first 
constraint in conjecture \ref{main_conjecture} is simply $\int_M v \,dV_g = C_1$. As explained in \cite{Do},
the constant $C_1$ is the prefactor of the $4$-dimensional Einstein action (the integrated scalar curvature),
which is a physically measurable constant (essentially, the inverse of Newton's constant).
Thus, we write
\begin{gather}
\int_M v\, dV_g = \frac{1}{G_N}
\label{const_G_N} 
\end{gather}
in terms of a fixed constant $G_N$.\footnote{One could set $G_N=1$ and thus choose the physical unit of length.
Also, we have left out a conventional factor of $16\pi$.}
Notice that solutions to (\ref{eq_v_al}) need not automatically satisfy 
(\ref{const_G_N}).
However, if there exists a solution $v_0$ of (\ref{eq_v_al}) satisfying (\ref{const_G_N}), 
then, all other solutions
will automatically satisfy this condition. Indeed, notice that every solution of (\ref{eq_v_al})
is of the form 
\begin{gather}
v = v_0 + w ,
\label{affine}
\end{gather}
where $w$ is a solution to the homogeneous equation associated with
(\ref{eq_v_al}). Since $w$ must satisfy $\int_M w\, dV_{g} = 0$ (see the proof of $(1)$ in theorem \ref{main_theorem}),
if follows that
\begin{gather}
\int_M v\, dV_g = \int_M v_0 \,dV_g = \frac{1}{G_N}.
\label{orthogonality}
\end{gather}
Now, (\ref{eq_v_al}), being a linear equation, has the property that, a solution $v$ with a particular 
value of $\al$ can be rescaled to a solution $\lambda v$ for another value $\lambda\al$. Therefore we can
choose $\al$ such that (\ref{const_G_N}) holds. Unless stated otherwise, from now on we assume that critical 
points are always tuned to satisfy condition (\ref{const_G_N}).

Evaluating the functional at $v$ and using the constraints yields
\begin{gather}
 \cV = \frac{\al}{4G_N} .
\label{func_const}
\end{gather}
Because of (\ref{const_G_N}), the dependence of $\cV$ on $\varphi$ and $v$ is through the Lagrange
multiplier $\al$, and we now seek to write this dependence in a more explicit fashion.

In light of (\ref{func_const}), in order to prove conjecture \ref{main_conjecture} we need to
only discuss the case where 
$\al$ is negative. So, in the rest of this section we make that 
assumption (some features of the case $\al \geq 0$ are discussed in sections \ref{existence_critical} and 
\ref{d_not_4}).
Defining $u$ by $v = \frac{|\al|}{6} u$, then
it is seen to satisfy
\begin{gather}
P_g u \equiv \Delta_g u + (- \frac{1}{3}R_g + \frac{1}{6} F_g - \frac{1}{6} \Tsg ) u = -1.
\label{eq_u_g}
\end{gather}
Using (\ref{const_G_N}) to express $\al$ in terms of $u$, one has
\begin{gather}
 |\al| = \frac{6}{G_N\int_M u dV_g}.
\label{al_in_terms_u}
\end{gather}
Writing a general metric $g$ in the conformal class in terms of a fixed
background metric $g_0$, $g = e^{2\varphi} g_0$, and using 
(\ref{affine}), (\ref{orthogonality}), and (\ref{al_in_terms_u})  into (\ref{func_const}) finally gives
\begin{gather}
\cV = - \frac{6}{4G_N^2 \int_M e^{n\varphi} u\, dV_{g_0} } .
\label{funct_final_form}
\end{gather}

By the solution of the Yamabe problem
\cite{Au1,  S1, Tr, Ya}, the metric $g_0$ can be assumed to have constant scalar curvature $R_{g_0}$, and
henceforth we do so.
This will be  positive, negative, or zero according
to the sign of the Yamabe invariant\footnote{We use the term ``Yamabe invariant'' to denote the
invariant of the conformal class, whereas the supremum 
over all conformal classes is called ``Topological Yamabe invariant".}
of $(M,g_0)$. It will be convenient to express all quantities in terms of this fixed background metric.

Under $g = e^{2\varphi} g_0$,
\begin{gather}
 R_g = e^{-2\varphi} \big (-2(n-1) \Delta_{g_0} \varphi - (n-1)(n-2) |\nabla_{g_0}\varphi|_{g_0}^2 + R_{g_0} \big), 
\label{scalar_transformation}
\end{gather}
and 
\begin{gather}
 \Delta_g u = e^{-2\varphi}  \big (\Delta_{g_0} u + (n-2) \langle \nabla_{g_0} \varphi, \nabla_{g_0} u \rangle_{g_0} \big ).
\label{Laplacian_transformation}
\end{gather}

Recall that, if $F^{(p)}$ is a $p$-form, then
\begin{align}
\begin{split}
 |F^{(p)}|_g^2 & = g^{\mu_1 \nu_1} \cdots g^{\mu_p \nu_p} F_{\mu_1 \cdots \mu_p} F_{\nu_1 \cdots \nu_p} \\
& = e^{-2p\varphi} g_0^{\mu_1 \nu_1} \cdots g_0^{\mu_p \nu_p} F_{\mu_1 \cdots \mu_p} F_{\nu_1 \cdots \nu_p} \\
& = e^{-2p\varphi} |F^{(p)}|_{g_0}^2 .
\label{gauge_fields_p}
\end{split}
\end{align}
Hence, the gauge fields expressed in terms of the metric $g_0$ and $\varphi$ become
\begin{gather}
\sum_{p=1}^L |F^{(p)}|_g^2 = \sum_{p=1}^L e^{-2p\varphi} |F^{(p)}|_{g_0}^2 .
\label{gauge_fields_transformation}
\end{gather}

Now we need to ask how $\Tsg$ transforms under $g = e^{2\varphi } g_0$.
The basic example of $\Tsg$ in string theory is the so-called orientifold plane, which is supported on
a submanifold \cite{CHSW, GKP}. Another common example is a Chern-Simons or topological term,
as used in \cite{Becker,GKP}.  What is important for our problem is its dependence on the conformal factor.
Thus, we will take $\Tsg$ to be a function or  a generalised function on $M$, and postulate a homogeneous
dependence on the conformal factor,
\begin{gather}
 \Tsg = e^{-2\beta \varphi} \Ts ,
\label{T_transformation}
\end{gather}
which is consistent with (\ref{gauge_fields_transformation}).
In the main result of this paper, 
theorem \ref{main_theorem}, we will consider different situations which 
will allow for different choices of $\beta$.
The first two cases, $\beta=0$ and $\beta = 1$, will be treated together, and in fact 
the proof in these cases works for any $0 \leq \beta \leq 1$ .  Another case of special interest
is $\beta=n/2$ which is appropriate for a delta function source.  Finally, 
a variation on this condition which includes orientifolds and many other cases is to make
the simple assumption
\begin{gather}
 \int_M \Tsg \, dV_g = \int_M \Ts \, dV_{g_0}, ~~\text { for all } g = e^{2\varphi} g_0.
\label{transformation_orientifold}
\end{gather}
See sections \ref{discussion} and \ref{examples} for a more detailed discussion on the relevance of each hypothesis 
for the physics of string compactifications.

Upon combining (\ref{scalar_transformation}), (\ref{Laplacian_transformation}), (\ref{gauge_fields_transformation})
and (\ref{T_transformation}), equation (\ref{eq_u_g}) becomes
\begin{gather}
 M_{g_0} u =  \Delta_{g_0} u + (n-2) \langle \nabla_{g_0} \varphi, \nabla_{g_0} u \rangle_{g_0} + U\,  u = -e^{2\varphi} ,
\label{eq_u} 
\end{gather}
with
\begin{gather}
U \equiv \frac{2}{3}(n-1) \Delta_{g_0} \varphi + \frac{1}{3}(n-1)(n-2) |\nabla_{g_0} \varphi|^2_{g_0} 
- \frac{1}{3}R_{g_0} + \cF(\varphi) - \frac{1}{6} e^{2(1-\beta) \varphi} \Ts 
\label{def_U}
\end{gather}
and
\begin{gather}
 \cF(\varphi) = \frac{1}{6}  \sum_{p=1}^L e^{2(1-p)\varphi}  |F^{(p)}|_{g_0}^2 .
\label{cF_def}
\end{gather}
We will 
not need the specific form of $\cF$, it being enough for our proof to notice that $\cF \geq 0$.\footnote{
This is actually not manifest in supergravity as there are Chern-Simons and other non-quadratic terms,
but it is shown in Ref. \cite{Douglas:2010rt}.}

Let us now restate conjecture \ref{main_conjecture} as a theorem that will be proven in the subsequent sections.
After stating the results we discuss the physical meaning of the hypotheses some consequences.

\begin{theorem}
Let $(M, g_0)$ be a compact orientable $n$-dimensional Riemannian manifold without boundary,
let $A \in \RR_+$. Fix a collection of 
smooth $p$ forms $\{ F^{(p)} \}_{p=1}^L$, and, a function $\Ts \in C^\infty(M)$ which transforms under conformal changes as
(\ref{T_transformation}), for some $\beta \geq 0$. For any smooth 
function $\varphi$ on $M$ define  $\cF(\varphi)$ as in (\ref{cF_def}).
Define $\cA = \{ \varphi \in C^\infty(M) ~|~ \int_M e^{n\varphi} \,dV_{g_0} = A \}$, and
let $\cS \subset \cA$ be the set of $\varphi \in \cA$ such that equation (\ref{eq_u}) has a positive solution 
$u = u(\varphi,g_0, \Ts, \cF)$. Then: \\

\noindent (1) The map
$\mathfrak{F}_{g_0,\Ts,\cF}: \cS \rar \RR$ given by
\begin{gather}
 \mathfrak{F}_{g_0,\Ts,\cF}(\varphi) = - \frac{6}{4G_N^2 \int_M e^{n\varphi} u \, dV_{g_0} },
\label{functional}
\end{gather}
is well defined. \\

\noindent (2) Fix $\eta>0$ and define 
\begin{gather}
\cS_\eta = \Big \{ \varphi \in \cS ~ \Big |~ \int_M R_g \, dV_g \leq \eta \,, \text{ where } g = e^{2\varphi}g_0 \Big \} .
\nonumber
\end{gather}
Suppose that $0 \leq \beta \leq 1$, or $\beta = \frac{n}{2}$, or (\ref{transformation_orientifold}) holds. Then there 
exists a constant $K_{\eta} \in \RR$ such that
\begin{gather}
 \inf_{\varphi \in \cS_\eta } \mathfrak{F}_{g_0,\Ts,\cF}(\varphi) \geq K_{\eta}.
\label{bound_1}
\end{gather}
In particular, $\mathfrak{F}_{g_0,\Ts,\cF}$ is bounded from below if the scalar 
curvature is uniformly bounded. Moreover, if $\dim M = 2$ then
\begin{gather}
 \inf_{\varphi \in \cS} \mathfrak{F}(\varphi) \geq K,
\label{bound_dim_2}
\end{gather}
for some $K \in \RR$. \\

\noindent (3) Furthermore, if we define
\begin{gather}
\widetilde{\cS}_\eta = \Big \{ \varphi \in \cS_\eta ~ \text{ such that } 
 \int_M \big ( F_g - \Tsg \big ) \, dV_g \geq - \eta \,, \text{ where } g = e^{2\varphi}g_0 \Big \} ,
\label{condition_negative_type}
\end{gather}
then there exists a constant $K_{\eta} \in \RR$ such that
\begin{gather}
 \inf_{\varphi \in \widetilde{\cS}_\eta } \mathfrak{F}_{g_0,\Ts,\cF}(\varphi) \geq K_{\eta},
\label{bound_1_int}
\end{gather}
for any value of $\beta$ in (\ref{T_transformation}).

\label{main_theorem}
\end{theorem}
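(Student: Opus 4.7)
Since $\int_M e^{n\varphi} u\, dV_{g_0} = \int_M u\, dV_g$, the value $\mathfrak{F}_{g_0,\Ts,\cF}(\varphi) = -\tfrac{6}{4G_N^2 \int_M u\, dV_g}$ is manifestly negative, so the crux of all three claims is a uniform \emph{lower} bound on the positive quantity $\int_M u\, dV_g$. My plan is to obtain this bound through the elementary Cauchy--Schwarz inequality
\[ A^2 \;=\; \Big(\int_M dV_g\Big)^{\!2} \;\le\; \Big(\int_M u\, dV_g\Big)\Big(\int_M u^{-1}\, dV_g\Big), \]
and to upper-bound $\int_M u^{-1}\, dV_g$ by dividing (\ref{eq_u_g}) by $u>0$ and integrating. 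Part (1) is separate: two positive solutions of $P_g u = -1$ differ by $w \in \ker P_g$, and well-definedness reduces to $\int_M w\, dV_g = 0$; since $P_g = \Delta_g + V$ is formally self-adjoint on $L^2(dV_g)$ and $P_g u = -1$ has a solution by hypothesis, the Fredholm alternative forces the constant $-1$ to be $L^2(dV_g)$-orthogonal to $\ker P_g$, which is exactly the desired vanishing. Positivity of $u$ and $e^{n\varphi}$ then makes the denominator of (\ref{functional}) strictly positive, giving a finite real value.

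For part (2), divide (\ref{eq_u_g}) by $u$ and integrate over $M$ against $dV_g$. Compactness of $M$ without boundary and $u$ being bounded away from zero justify $\int_M (\Delta_g u)/u\, dV_g = \int_M |\nabla_g u|_g^2/u^2\, dV_g \ge 0$, which combined with $P_g u = -1$ yields
\[ \int_M u^{-1}\, dV_g \;\le\; -\int_M V\, dV_g \;=\; \tfrac{1}{3}\int_M R_g\, dV_g \;-\; \tfrac{1}{6}\int_M F_g\, dV_g \;+\; \tfrac{1}{6}\int_M \Tsg\, dV_g. \]
On $\cS_\eta$ the first term is at most $\eta/3$, and the middle term is nonpositive because $F_g \ge 0$ pointwise, so only $\int_M \Tsg\, dV_g$ needs to be controlled. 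The transformation rule (\ref{T_transformation}) gives $\int_M \Tsg\, dV_g = \int_M e^{(n-2\beta)\varphi}\Ts\, dV_{g_0}$, which equals the fixed constant $\int_M \Ts\, dV_{g_0}$ when $\beta = n/2$ or when (\ref{transformation_orientifold}) holds. When $0 \le \beta \le 1$, H\"older's inequality with conjugate exponents $n/(n-2\beta)$ and $n/(2\beta)$ (read as $L^\infty$ at $\beta = 0$) together with the constraint $\int_M e^{n\varphi}\, dV_{g_0} = A$ gives $|\int_M \Tsg\, dV_g| \le A^{(n-2\beta)/n}\,\|\Ts\|_{L^{n/(2\beta)}(dV_{g_0})}$, which is finite because $\Ts \in C^\infty(M)$ and $M$ is compact. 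In all three cases $\int_M u^{-1}\, dV_g \le C(\eta, A, \Ts)$, and Cauchy--Schwarz delivers $\int_M u\, dV_g \ge A^{2}/C$, hence (\ref{bound_1}). For $\dim M = 2$, Gauss--Bonnet pins $\int_M R_g\, dV_g = 4\pi\chi(M)$ to a topological constant, so the $\cS_\eta$ bound with $\eta = 4\pi\chi(M)$ automatically covers all of $\cS$ and yields (\ref{bound_dim_2}).

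Part (3) is an immediate strengthening: grouping the two non-curvature terms, the same key inequality reads $\int_M u^{-1}\, dV_g \le \tfrac{1}{3}\int_M R_g\, dV_g - \tfrac{1}{6}\int_M (F_g - \Tsg)\, dV_g$, and the two defining hypotheses of $\widetilde{\cS}_\eta$ directly give $\int_M u^{-1}\, dV_g \le \eta/3 + \eta/6 = \eta/2$ with no restriction on $\beta$; Cauchy--Schwarz then produces (\ref{bound_1_int}). I expect the only genuinely non-routine step to be keeping the case analysis for $\int_M \Tsg\, dV_g$ in part (2) honest, in particular handling the degenerate endpoints $\beta = 0$ and $\beta = n/2$ cleanly; beyond that the whole proof amounts to a single integration by parts followed by a single Cauchy--Schwarz.
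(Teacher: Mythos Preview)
Your proof is correct, and for parts (1), (3), and the $\beta=n/2$ / orientifold subcase of (2) it coincides with the paper's argument almost verbatim (your Cauchy--Schwarz $A^2 \le \int u\,dV_g \cdot \int u^{-1}\,dV_g$ is exactly the paper's Jensen inequality for $x\mapsto 1/x$).

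Where you genuinely diverge is in the $0\le\beta\le 1$ subcase of (2). The paper works in the background metric $g_0$: dividing (\ref{eq_u}) by $u$ and integrating against $dV_{g_0}$ yields a bound on $\int_M e^{2\varphi}/u\,dV_{g_0}$ rather than on $\int_M e^{n\varphi}/u\,dV_{g_0}$, so the paper then runs a contradiction argument: assume $\int_M e^{n\varphi}u\,dV_{g_0}\to 0$, use Cauchy--Schwarz to force $\int_M e^{(n+2)\varphi/2}\,dV_{g_0}\to 0$, bootstrap this to all exponents $s<n$, multiply the equation by $e^{(n-2)\varphi}/u$ and integrate again, and finally contradict the scalar-curvature hypothesis. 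You instead stay in the conformal metric $g$ throughout, exactly as the paper does for $\beta=n/2$, and observe that the only term needing control is $\int_M \Tsg\,dV_g = \int_M e^{(n-2\beta)\varphi}\Ts\,dV_{g_0}$; since $0\le n-2\beta\le n$ for $\beta\in[0,1]$ and $n\ge 2$, a single H\"older against the volume constraint bounds this uniformly. This is shorter, avoids the bootstrap entirely, and unifies all three subcases of (2) into one computation. What the paper's longer route buys is the extra information recorded in \S\ref{proof_remarks}: along a hypothetical minimizing sequence one has $\int_M e^{s\varphi}\,dV_{g_0}\to 0$ for $s<n$ and $\to\infty$ for $s>n$, so $e^{n\varphi}$ concentrates like a sum of Dirac masses. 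Your argument does not see this, but it is not needed for the theorem as stated.
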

A different approach to the theorem
would be to impose conditions directly on the functions $\varphi$ appearing
in the conformal factor $e^{2\varphi}$. In this regard we prove:
\begin{prop}
Assume the same hypotheses of Theorem \ref{main_theorem}. Suppose that $0\leq \beta \leq 1$ and 
that $\dim(M) \geq 3$. If $B_\eta(0)$ is 
the ball of radius $\eta$ in $L^1(M,g_0)$ then
there exists a constant $K_{\eta} \in \RR$ such that
\begin{gather}
 \inf_{\varphi \in \cS \cap B_\eta(0)} \mathfrak{F}_{g_0,\Ts,\cF}(\varphi) \geq K_{\eta}.
\label{bound_2}
\end{gather}
\label{little_prop}
\end{prop}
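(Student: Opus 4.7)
Since $\mathfrak{F}_{g_0, \Ts, \cF}(\varphi) = -6/(4G_N^2 D(\varphi))$ with $D(\varphi) := \int_M e^{n\varphi} u\, dV_{g_0}$ and $u > 0$, it suffices to exhibit a strictly positive lower bound $D(\varphi) \geq c(\eta) > 0$ on $\cS \cap B_\eta(0)$. My first step is to combine the warped volume constraint $\int_M e^{n\varphi}\, dV_{g_0} = A$ with $\|\varphi\|_{L^1(g_0)} \leq \eta$ to obtain a priori bounds on integrals of $e^{p\varphi}$. Jensen's inequality applied to $e^{n\cdot}$ yields $\int_M \varphi\, dV_{g_0} \leq \frac{\vol_{g_0}(M)}{n} \log(A/\vol_{g_0}(M))$, which combined with the $L^1$ bound controls $\varphi^+$ and $\varphi^-$ separately in $L^1$. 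H\"older interpolation between $\int 1 = \vol_{g_0}(M)$ and $\int e^{n\varphi} = A$ then gives $\int_M e^{p\varphi}\, dV_{g_0} \leq A^{p/n} \vol_{g_0}(M)^{(n-p)/n}$ for any $p \in [0, n]$; since $0 \leq \beta \leq 1$ forces $2(1-\beta) \in [0, 2] \subset [0, n]$ (using $n \geq 3$) and since the gauge exponents $2(1-p) \leq 0$, every $\varphi$-dependent coefficient in $U$ other than the derivative ones is uniformly bounded in $L^1(g_0)$.

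Next, I would multiply (\ref{eq_u}) by $e^{(n-2)\varphi}$ and integrate over $M$. The identity
\begin{gather*}
e^{(n-2)\varphi}\bigl[\Delta_{g_0} u + (n-2)\langle \nabla_{g_0}\varphi, \nabla_{g_0} u\rangle_{g_0}\bigr] = \operatorname{div}_{g_0}\bigl(e^{(n-2)\varphi} \nabla_{g_0} u\bigr)
\end{gather*}
eliminates the principal part by Stokes on the closed manifold, yielding $\int_M e^{(n-2)\varphi} U u\, dV_{g_0} = -A$. Integrating by parts on the $\frac{2(n-1)}{3}\Delta_{g_0}\varphi$ piece of $U$ and completing the square on the resulting $|\nabla_{g_0}\varphi|^2$ and $\langle \nabla_{g_0}\varphi, \nabla_{g_0} u\rangle_{g_0}$ terms isolates a nonnegative quadratic form in the combination $\nabla_{g_0}\varphi + \frac{1}{(n-2)u}\nabla_{g_0} u$. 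Discarding this square produces a one-sided inequality relating $A$, the weighted integral $\int_M u e^{(n-2)\varphi}\, dV_{g_0}$, the $u$-energy $\int_M e^{(n-2)\varphi}|\nabla_{g_0} u|^2/u\, dV_{g_0}$, and the coefficient integrals already bounded in the first paragraph. The remaining energy term is in turn controlled by a second test-function identity obtained by multiplying (\ref{eq_u_g}) by $u$ in its $g$-metric form and integrating, whence one extracts a lower bound on $\int_M u e^{(n-2)\varphi}\, dV_{g_0}$.

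Finally, a lower bound on $D(\varphi) = \int_M u e^{n\varphi}\, dV_{g_0}$ follows from H\"older interpolation
\begin{gather*}
\int_M u e^{(n-2)\varphi}\, dV_{g_0} \leq \Bigl(\int_M u e^{n\varphi}\, dV_{g_0}\Bigr)^{\frac{n-2}{n}} \Bigl(\int_M u\, dV_{g_0}\Bigr)^{\frac{2}{n}},
\end{gather*}
which has strictly positive exponents precisely when $n \geq 3$; combining it with an upper bound on $\int_M u\, dV_{g_0}$ (obtained by testing (\ref{eq_u}) against the weight $e^{-(n-2)\varphi}$, itself bounded in $L^1$ by the first paragraph, or from a maximum-principle $L^\infty$-estimate for $u$) inverts into the claimed bound $D(\varphi) \geq c(\eta)$. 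The hard part will be that $\|\varphi\|_{L^1(g_0)}$ controls no derivative of $\varphi$, so every occurrence of $\Delta_{g_0}\varphi$ or $|\nabla_{g_0}\varphi|^2$ in $U$ must be handled entirely by integration by parts, arranged so that the resulting derivative terms appear either as a manifestly nonnegative square (discardable in a lower bound) or as a cross term with $\nabla_{g_0} u$ absorbable via Cauchy--Schwarz into the $u$-energy; keeping this sign structure uniform across the full range of admissible background data $(R_{g_0}, \cF, \Ts)$ is the central technical point of the argument.
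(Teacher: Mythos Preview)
Your argument has a genuine gap in the final step. To invert the H\"older inequality
\[
\int_M u\, e^{(n-2)\varphi}\, dV_{g_0} \leq D(\varphi)^{\frac{n-2}{n}}\Bigl(\int_M u\, dV_{g_0}\Bigr)^{\frac{2}{n}}
\]
into a lower bound on $D(\varphi)$, you need an upper bound on $\int_M u\, dV_{g_0}$. You propose two ways to obtain it: testing (\ref{eq_u}) against $e^{-(n-2)\varphi}$, ``itself bounded in $L^1$ by the first paragraph,'' or an $L^\infty$ maximum-principle estimate on $u$. Neither works. The H\"older interpolation in your first paragraph bounds $\int_M e^{p\varphi}\, dV_{g_0}$ only for $p\in[0,n]$; for $n\geq 3$ the exponent $-(n-2)$ is strictly negative, and an $L^1$ bound on $\varphi$ does \emph{not} control $\int_M e^{-q\varphi}$ for any $q>0$ (take $\varphi=-K\chi_E$ with $K\,\vol_{g_0}(E)$ fixed and $K\to\infty$). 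As for a maximum-principle bound on $u$, the zero-order coefficient of (\ref{eq_u}) contains $\Delta_{g_0}\varphi$ and $|\nabla_{g_0}\varphi|^2$, neither of which is controlled by $\|\varphi\|_{L^1}$, so no such $L^\infty$ estimate is available. In fact, the $L^1$ hypothesis on $\varphi$ is never used substantively in your outline: the $\varphi^{\pm}$ control you extract from it feeds into nothing later. A related issue already appears in your second step: after discarding the square, the inequality still contains $\int_M u\, e^{(n-2\beta)\varphi}\,\Ts\, dV_{g_0}$ with $n-2\beta\in[n-2,n]$, which cannot be bounded independently of the very quantity $D(\varphi)$ you are trying to estimate.

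The paper's proof takes a completely different (and much shorter) route. It argues by contradiction: if $D(\varphi_i)\to 0^+$ along a sequence in $\cS$, then the portion of the proof of Theorem~\ref{main_theorem}(2) that precedes the use of the scalar-curvature bound already yields $\int_M e^{s\varphi_i}\, dV_{g_0}\to 0$ for every $1\leq s<n$ (this is exactly (\ref{boostrap})). Jensen's inequality for the convex function $t\mapsto e^{st}$ then gives
\[
\int_M e^{s\varphi_i}\, dV_{g_0}\;\geq\;\vol_{g_0}(M)\,\exp\!\Bigl(\tfrac{s}{\vol_{g_0}(M)}\int_M\varphi_i\, dV_{g_0}\Bigr),
\]
so $\int_M\varphi_i\, dV_{g_0}\to -\infty$, hence $\|\varphi_i\|_{L^1(g_0)}\to\infty$, contradicting $\varphi_i\in B_\eta(0)$. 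The $L^1$ hypothesis enters precisely here, as a lower bound on $\int_M\varphi$, and no upper bound on $u$ or on $\int e^{-q\varphi}$ is ever needed.
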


\subsection{Discussion of the hypotheses.\label{discussion}}

Let us make some comments on the hypotheses and the content of theorem \ref{main_theorem}.
We first discuss the physics of  (\ref{T_transformation}) and (\ref{transformation_orientifold}).

The ``non-classical'' source terms $T_{\text{st}}$ differ from the usual stress-energy tensor of general
relativity in two ways.  First, they are generally associated to quantum effects and anomalies in
string theory and M theory, as in the anomaly cancellation terms of the heterotic string called
on in \cite{CHSW}.  Second, and more importantly for us, they violate the positive energy condition.
This allows finding compactifications to Minkowski space-time even in the presence of other positive
contributions to the energy, again as first seen in \cite{CHSW}.  But it means that any positive (or bounded
below) energy theorem in string/M theory will require placing some condition on $T_{\text{st}}$, which replaces the
positive energy condition.  Although
the physical consistency of string/M theory implies that some such condition exists, and
one can see some of its features in examples, at
present no precise and sufficiently general statement of the condition has been proposed.  Thus we put forward 
(\ref{T_transformation}) and (\ref{transformation_orientifold}) as candidates, which suffice to prove our main results.

Some of the $\beta$ values of interest 
are the ``topological'' case referred to above, namely, $\beta = 0$, 
and $\beta = \frac{n}{2}$, which is how $\Tsg$ transforms when it is a delta function, since
\begin{gather}
1 =  \int_M \de_g \, dV_g = \int_M  e^{-n \varphi}  \de_{g_0} e^{n\varphi} \, dV_{g_0} =  \int_M \de_{g_0} \, dV_{g_0}.
\nonumber 
\end{gather}
Notice that $\beta = \frac{n}{2}$ is also the case where $\Tsg$ scales in the same way as 
in (\ref{gauge_fields_p}) with $p=\frac{n}{2}$, which is when the the gauge fields 
have a conformally invariant stress tensor. Another natural case 
is the ``naive'' choice $\beta = 1$, i.e., declaring that $\Tsg$ transforms in the same way as the metric.

Yet another important example is the orientifold plane mentioned above. In this case, $\Tsg$ 
is  supported on a closed submanifold $N \subset M$ in the sense that
\begin{gather}
 \Tsg \equiv 0 ~~\text{ on }~~ M\backslash N, ~~\text{ and } ~~\int_M \Tsg \, dV_g = \int_N \Tsg \, d\Sigma_g = 1,
\label{orientifold_Tsg}
\end{gather}
where $d\Sigma_g$ is the induced volume element on $N$. The simplest way to model 
(\ref{orientifold_Tsg}) is to have $\Tsg = \de_g(N)$, i.e., the measure assigning one to $N$ and
zero to $M\backslash N$, although choices like $\Tsg = \chi_N$, or some smooth approximation of it, 
can also be considered ($\chi_E$ being the characteristic function of a set $E$).
If $N$ is $k$-dimensional, then $d\Sigma_g = e^{k\varphi}\,d\Sigma_{g_0}$, and hence by analogy with the 
delta function, we require that in the orientifold case the string term transforms as
\begin{gather}
 \Tsg = e^{-k\varphi} \Ts.
\nonumber
\end{gather}
This corresponds to $\beta = \frac{k}{2}$, where $k$ is the dimension of the submanifold supporting
$\Ts$. Notice that $\beta = \frac{n}{2}$ and an orientifold-like string term can 
both be encoded in the assumption (\ref{transformation_orientifold}).

Finally, even if neither of these conditions were to hold, we 
can prove the result under the global condition (\ref{condition_negative_type}),
which states that, any negative contribution of $\Tsg$ is compensated by the energy of matter and fluxes,
as comes out of many analyses and discussions \cite{GKP,MN,WSD}.

Turning to mathematical questions, the first and most obvious is whether the set $\cS$ is empty or not. 
We address this question in section \ref{existence_critical}, where we provide conditions
for the existence of solutions to equation (\ref{eq_u}).  We will also see reasons to think that
$\cS\neq \cA$ in general.  Presumably this is because string/M theory effects cannot be neglected in
these cases, but not much is known about this.

That the map $\mathfrak{F}_{g_0,\Ts,\cF}$ is well-defined is not a surprise -- by Eq. (\ref{orthogonality}),
$\mathfrak{F}_{g_0,\Ts,\cF}$ does not depend on elements in the kernel of $P_g$. 

In (\ref{bound_1}), the infimum is taken not only over all conformal factors satisfying the 
original constraint $\int_M e^{n\varphi} \,dV_{g_0} = A$, but also obeying the extra integral bound 
$\int_M R_g \, dV_g \leq \eta$.  

Curvature bounds are physically appropriate whenever one studies quantum gravity and string/M theory 
in the language of general relativity.  These theories have a preferred scale of length below which a description
in terms of classical space-time breaks down, the Planck length $L_{pl}$ for quantum gravity and M theory,
and the string length $L_s$
for string theory.  Let $L=\max(L_{pl},L_s)$, then a description in terms of a space-time metric satisfying
Einstein's equations will generally only be valid when the curvature of the metric is much less than $1/L^2$.
This would apply to every component of the curvature as well as its derivatives, and the $p$-form field strengths
$F^{(p)}$ and their derivatives.  In this sense, the physics naturally places stronger conditions, such as 
uniform curvature 
bounds\footnote{We recall that curvature bounds 
have been successfully employed to study the long-time existence of Einstein equations \cite{An, KR}.}.  
As we explain below, the proof with these conditions is an 
elementary consequence of the more general assumptions that are adopted here.

Mathematically, since all constraints in conjecture \ref{main_conjecture}
are integral conditions, one would prefer, even if merely for aesthetic reasons, to 
have any additional conditions to be integral as well, as in theorem \ref{main_theorem}.

The bound on the scalar curvature is also a very reasonable
geometrical assumption, in the sense
that curvature bounds are commonplace in Riemannian Geometry (see e.g. \cite{Au2, CE,  SY1}).
Finally, we point out that $\int_M R_g \, dV_g \leq \eta$ is a natural generalization
of the situation in $2$ dimensions, where the condition is automatically 
satisfied due the the Gauss-Bonnet formula.

We should also make a comment about our assumptions of regularity. 
In order to avoid technicalities that would 
obfuscate the main ideas, all our functions are assumed to be smooth.
This obviously excludes cases such as $\Ts \sim \de_{g_0}(p)$, for some $p \in M$,
or $\Ts \sim \delta_{g_0}(N)$, for some submanifold $N$.
However it will be clear that, after a suitable interpretation in the context of linear
equations involving a generalized function, the same argument works (see section \ref{examples}). 
Other regularity assumptions can also be greatly relaxed.
See for example \cite{GT, LaUl, Tr1, Tr2} for generalizations of the techniques here employed 
to conditions of less regularity.

Finally, let us address the $L^1(M,g_0)$ condition in proposition \ref{little_prop}.
A bound on the $L^1(M,g_0)$ norm of $\varphi$ seems to be a fairly standard
hypothesis if we focus exclusively on the analytical aspects of the problem.
From a more geometric perspective, it can be interpreted as follows. Even 
smooth conformal factors can approach
distributions with very bad singularities, in which case the limit metrics 
would be highly degenerate. On the other hand, singularities do occur in quantum field theory, and 
we may not want to completely rule them out by impose very strong conditions, and hence
simple $L^p$ bounds seem appropriate. Moreover, as we are dealing with compact 
manifolds, the $L^1$ norm is the weakest of all $L^p$ norms and so the weakest possible $L^p$
bound is to require $\varphi \in \cS \cap B_\eta(0)$.

At the end of the day, our hypotheses should be justified on physical grounds. 
Imposing that $\int_M \varphi \,dV_{g_0}$ is bounded above is more than appropriate, as we
know from experience that if the extra compact dimensions exist, they have to be small. On the other
hand, if $\int_M \varphi \,dV_{g_0} \rar -\infty$ then such extra dimensions would collapse. 
It would be interesting to analyze what type of physics can emerge in this setting. 
For example, we illustrate in section \ref{proof_remarks} that, if the effective
potential is not bounded from below then the singularities assume a rather specific form.

\section{Proof of theorem \ref{main_theorem}}

In this section we prove theorem \ref{main_theorem}.  We use the letter $C$
to denote several different constants that appear in the estimates. They
will never depend on $\varphi$, $u$ or $i$, where $i$ indexes a sequence in $\cS$ (see 
proof below). For $0 \leq \beta \leq 1$, the most natural choices are $\beta =0,1$, as
we stated in the theorem. But since the proof works without modifications
for values of $\beta$ between zero and one, we will write 
 $0 \leq \beta \leq 1$ explicitly in some passages of the proof in order to stress this fact. \\

\noindent \emph{Proof of theorem \ref{main_theorem}-(1)}:
First we need to show that the map
$\mathfrak{F}_{g_0,\Ts,\cF}$ is well-defined. For a given $\varphi \in \cS$, let $w$ belong to 
the kernel of $P_g$, or equivalently to the kernel of $M_{g_0}$, where
$P_g$ and $M_{g_0}$ have been defined in (\ref{eq_u_g}) and (\ref{eq_u}), respectively. 
By the Fredholm alternative (see appendix), $w$ is then $L^2$ orthogonal to the image of $P_g$, and since
$-1$ belongs to the image by hypothesis,
\begin{gather}
 - \int_M w \, dV_g = - \int_M w e^{n\varphi} \,dV_{g_0} = 0 .
\nonumber
\end{gather}
Now, if $u_1$ and $u_2$ are two different solutions of (\ref{eq_u_g}), then $u_1 - u_2$ belongs
to the kernel of $P_g$ and
\begin{gather}
 \int_M  e^{n\varphi} u_1 \, dV_{g_0} =
\int_M  e^{n\varphi} u_2 \, dV_{g_0} +  \int_M  e^{n\varphi}  (u_1 - u_2) \, dV_{g_0}
=\int_M  e^{n\varphi} u_2 \, dV_{g_0} ,
\nonumber
\end{gather}
showing that the map is well defined. \\
 
\noindent \emph{Proof of theorem \ref{main_theorem}-(2), $0 \leq \beta \leq 1$}:
Here we prove part \emph{(2)} of the theorem for $\beta$ between zero and one. 
A separate proof for the case 
$\beta=\frac{n}{2}$ and when (\ref{transformation_orientifold}) holds will be provided for the reasons explained
in section \ref{results}.

We start deriving an useful inequality. Notice that $v>0$, and so
is $u$. We are therefore allowed to divide equation
(\ref{eq_u}) by $u$. Doing so and integrating by parts yields
\begin{gather}
\begin{split}
 \int_M \frac{| \nabla_{g_0} u|^2_{g_0} }{u^2} \, dV_{g_0} 
+ \frac{1}{3}(n-1)(n-2) \int_M |\nabla_{g_0} \varphi |^2 \, dV_{g_0} 
+ (n-2) \int_M \frac{1}{u} \langle \nabla_{g_0} \varphi, \nabla_{g_0} u \rangle_{g_0} \, dV_{g_0} 
\label{basic_identity} \\
+ \int_M \frac{e^{2\varphi}}{u} \, dV_{g_0} + \int_M \cF \, dV_{g_0} 
- \frac{1}{6} \int_M e^{2(1-\beta) \varphi} \Ts \, dV_{g_0} = \frac{1}{3} R_{g_0} \vol_{g_0}(M) .
\end{split}
\end{gather}
The Cauchy inequality-with-epsilon (see appendix) gives
\begin{gather}
\begin{split}
 \int_M \frac{1}{u} \langle \nabla_{g_0} \varphi, \nabla_{g_0} u \rangle_{g_0} \, dV_{g_0} 
\geq - \frac{\ve}{2} \int_M |\nabla_{g_0} \varphi |^2 \, dV_{g_0} 
- \frac{1}{2\ve} \int_M \frac{| \nabla_{g_0} u|^2_{g_0} }{u^2} \, dV_{g_0} , 
\nonumber
\end{split}
\end{gather}
so,
\begin{gather}
\begin{split}
\int_M \frac{| \nabla_{g_0} u|^2_{g_0} }{u^2} \, dV_{g_0} 
+ \frac{1}{3}(n-1)(n-2) \int_M |\nabla_{g_0} \varphi |^2 \, dV_{g_0} 
+ (n-2) \int_M \frac{1}{u} \langle \nabla_{g_0} \varphi, \nabla_{g_0} u \rangle_{g_0} \, dV_{g_0} 
\label{cauchy_1}\\
\geq \Big( 1 - \frac{n-2}{2\ve} \Big) \int_M \frac{| \nabla_{g_0} u|^2_{g_0} }{u^2} \, dV_{g_0} 
+ (n-2) \Big ( \frac{n-1}{3} - \frac{\ve}{2} \Big ) \int_M |\nabla_{g_0} \varphi |^2 \, dV_{g_0} .
\end{split}
\end{gather}
Since $\frac{n-2}{2} < \frac{2(n-1)}{3}$ one can choose $\ve$ such that $\ve > \frac{n-2}{2}$ and
$\ve < \frac{2(n-1)}{3}$. Hence (\ref{basic_identity}) and (\ref{cauchy_1}) combine to give
\begin{gather}
\begin{split}
\frac{1}{3} R_{g_0} \vol_{g_0}(M)  \geq c_1 \int_M \frac{| \nabla_{g_0} u|^2_{g_0} }{u^2} \, dV_{g_0} 
+ c_2 \int_M |\nabla_{g_0} \varphi |^2 \, dV_{g_0} 
\label{basic_inequality} \\
+ \int_M \frac{e^{2\varphi}}{u} \, dV_{g_0} + \int_M \cF \, dV_{g_0} 
- \frac{1}{6} \int_M e^{2(1-\beta) \varphi} \Ts \, dV_{g_0} ,
\end{split}
\end{gather}
where $c_1$ and $c_2$ are positive constants. 

To prove (\ref{bound_1}), assume the result is not true. Then there exists
a sequence $\varphi_i \in S_\eta$, with corresponding solutions $u_i$ of (\ref{eq_u}), such that
$\mathfrak{F}_{g_0,\Ts}(\varphi_i) \rar - \infty$ as $i \rar \infty$, and therefore
\begin{gather}
\int_M e^{n\varphi_i} u_i \, dV_{g_0} \rar 0^+.
\label{limit_zero} 
\end{gather}
We will suppress the subscript $i$
for notational convenience, but all limits are to be understood as the limit when $i \rar \infty$.

Start noticing that 
$\int_M e^{n\varphi} \,dV_{g_0} = A$ implies that there exists a constant $C>0$ such that
\begin{gather}
 \int_M e^{2(1-\beta)\varphi} \, dV_{g_0} \leq C.
\nonumber
\end{gather}
This is obvious when $\beta = 1$ and for $\beta < 1$, H\"older's inequality gives
\begin{align}
\begin{split} 
\int_M e^{2(1-\beta)\varphi} \, dV_{g_0} & \leq \Big ( \vol_{g_0}(M) \Big )^\frac{n}{n-2(1-\beta)}
\parallel e^{ 2(1-\beta)\varphi}\parallel_{L^\frac{n}{2(1-\beta)}(M,g_0)} \\
& = \Big ( \vol_{g_0}(M) \Big )^\frac{n}{n-2(1-\beta)} \Big (\int_M e^{n\varphi} \, dV_{g_0} \Big )^\frac{2(1-\beta)}{n}.
\nonumber
\end{split}
\end{align}
Hence
\begin{align}
\Big |  \int_M e^{2(1-\beta) \varphi} \Ts \, dV_{g_0} \Big | & 
\leq \sup_{x\in M} | \Ts(x) | \int_M e^{2(1-\beta) \varphi}  \, dV_{g_0}
\leq C \sup_{x\in M} | \Ts(x) | .
\nonumber
\end{align}
Combining this with the fact that the first, second and fourth integrals on the 
right hand side of (\ref{basic_inequality}) are non-negative, produces
\begin{gather}
\begin{split}
C + \frac{1}{3} R_{g_0} \vol_{g_0}(M)
\geq \frac{1}{6} \int_M e^{2(1-\beta) \varphi} \Ts \, dV_{g_0} + 
\frac{1}{3} R_{g_0} \vol_{g_0}(M) \geq 
 \int_M \frac{e^{2\varphi}}{u} \, dV_{g_0} .
\nonumber
\end{split}
\end{gather}
Or, in other words, $\int_M \frac{e^{2 \varphi}}{u} \, dV_{g_0}$ is bounded independent
of $\varphi$.
Applying Cauchy-Schwartz gives
\begin{gather}
  0 \leq \int_M e^{\frac{n+2}{2}\varphi} \, dV_{g_0} = 
\int_M e^{\frac{n}{2}\varphi}\sqrt{u} \frac{e^{\varphi}}{\sqrt{u}} \, dV_{g_0} 
\leq \Big ( \int_M e^{n\varphi} u \, dV_{g_0} \Big )^\frac{1}{2} 
\Big ( \int_M \frac{e^{2 \varphi}}{u} \, dV_{g_0} \Big )^\frac{1}{2} .
\nonumber
\end{gather}
As the last term is bounded, from (\ref{limit_zero}) it follows that
\begin{gather}
\parallel e^\varphi \parallel_{L^\frac{n+2}{2}(M,g_0)}^\frac{n+2}{2} =  
\int_M e^{\frac{n+2}{2}\varphi} \rar 0 .
\nonumber
\end{gather}
By the H\"older inequality it then follows that $\parallel e^\varphi \parallel_{L^s(M,g_0)} \rar 0$ 
for any $1 \leq s \leq \frac{n+2}{2}$ . 
Now we will bootstrap to obtain this result for $s < n$. For any $p>1$, letting 
$\frac{1}{p}+\frac{1}{q} =1$ as usual, we see that: 
\begin{gather}
 \int_M e^{s\varphi} \, dV_{g_0} = 
\int_M e^{(s-\frac{n}{p})\varphi} e^{\frac{n}{p}\varphi} \, dV_{g_0}  \leq
\Big ( \int_M e^{q(s-\frac{n}{p})\varphi} \Big )^\frac{1}{q} \Big (\int_M e^{n \varphi} \Big )^\frac{1}{p}
= A^\frac{1}{p} \Big ( \int_M e^{\frac{p}{p-1}(s-\frac{n}{p})\varphi} \Big )^\frac{p-1}{p} 
\label{boots_int}
\end{gather}
Given $\frac{n+2}{2} < s < n$, we can always choose $p > 1$ such that 
\begin{gather}
\frac{p}{p-1}(s-\frac{n}{p}) = \frac{n+2}{2},
\label{eq_s_p} 
\end{gather}
so that the right hand side of (\ref{boots_int}) goes to zero.
In fact, as $p>1$, write $p = 1 + \de$, $\de > 0$, so that (\ref{eq_s_p}) reads
\begin{gather}
 s = \frac{n+2}{2} \frac{\de}{1+\de} + \frac{n}{1+\de}.
\nonumber
\end{gather}
Then $s$ is a decreasing function of $\de$, satisfying $s \rar n$ when $\de \rar 0$ and $s \rar \frac{n+2}{2}$ as $\de \rar \infty$.
Hence we have shown:
\begin{gather}
 \int_M e^{s\varphi}\,dV_{g_0} \rar 0~\text{ for any }~1\leq s < n.
\label{boostrap}
\end{gather}

Now multiply (\ref{eq_u}) by $\frac{e^{(n-2)\varphi}}{u}$, integrate, and integrate by parts the Laplacian terms.
Then terms involving $\frac{1}{u}\langle \nabla_{g_0} \varphi, \nabla_{g_0} u \rangle_{g_0}$ cancel out and
we obtain
\begin{gather}
\begin{split}
 \frac{1}{3}(n-1)(n-2)\int_M e^{(n-2)\varphi} |\nabla_{g_0} \varphi|^2 \,dV_{g_0} =
\int_M \frac{e^{n\varphi}}{u} \, dV_{g_0} + \int_M \frac{|\nabla_{g_0} u |^2}{u^2} e^{(n-2)\varphi} \, dV_{g_0} 
\label{n_2_varphi} \\
 -\frac{1}{3} \int_M e^{(n-2)\varphi} R_{g_0} \,dV_{g_0} 
+ \int_M  \cF \,dV_{g_0} -\frac{1}{6} \int_M   e^{(n-2\beta) \varphi} \Ts \,dV_{g_0} .
\end{split}
\end{gather}
If $\beta = 0$ then
\begin{gather}
\Big | \int_M   e^{(n-2\beta) \varphi} \Ts \,dV_{g_0}  \Big | =
\Big | \int_M   e^{n \varphi} \Ts \,dV_{g_0}  \Big | \leq 
\sup_{x \in M} |\Ts(x)| \int_M e^{n \varphi} \,dV_{g_0}   = A \sup_{x \in M} |\Ts(x)| ,
\nonumber
\end{gather}
which is uniformly bounded independent of $\varphi$. If $\beta > 0$ 
(and less or equal than one, see the hypotheses of the theorem)
then
\begin{gather}
\Big | \int_M   e^{(n-2\beta) \varphi} \Ts \,dV_{g_0}  \Big | =
\sup_{x \in M} |\Ts(x)| \int_M e^{(n-2\beta) \varphi} \,dV_{g_0} ,
\nonumber
\end{gather}
which goes to zero by (\ref{boostrap}). In any situation, there exists a constant $C>0$,
independent of $\varphi$ such that 
\begin{gather}
-\frac{1}{6} \int_M   e^{(n-2\beta) \varphi} \Ts \,dV_{g_0} \geq - C \sup_{x \in M} |\Ts(x)|.
\label{bound_sup_T}
\end{gather}
Evoking (\ref{boostrap}) again gives
\begin{gather}
\frac{1}{3} \int_M e^{(n-2)\varphi} R_{g_0} \,dV_{g_0} \rar 0 .
\label{scalar_goes_zero}
\end{gather}
Using (\ref{bound_sup_T}), (\ref{scalar_goes_zero}) with
\begin{gather}
 \int_M \frac{|\nabla_{g_0} u |^2}{u^2} \, dV_{g_0}\geq 0~~ \text{ and } \int_M  \cF \,dV_{g_0} \geq 0 ,
\nonumber
\end{gather}
into (\ref{n_2_varphi}) implies
\begin{gather}
 \frac{1}{3}(n-1)(n-2)\int_M e^{(n-2)\varphi} |\nabla_{g_0} \varphi |^2 \,dV_{g_0} \geq 
\int_M \frac{e^{n\varphi}}{u} \, dV_{g_0}  - C \sup_{x \in M} |\Ts(x)| + o(1),
\label{n_2_varphi_boud}
\end{gather}
where $o(1)$ denotes as usual a term that goes to zero.
Since $u>0$
and $f(x)=\frac{1}{x}$ is convex for $x>0$, Jensen's inequality (see appendix) with the measure $dV_g = e^{n\varphi}dV_{g_0}$ gives:
\begin{gather}
 \int_M \frac{e^{n\varphi }}{u} \, dV_{g_0} = \vol_g(M) \dashint_M \frac{1}{u}\,dV_g 
\geq A^2 \frac{1}{\int_M u \,dV_g} = \frac{A^2 }{\int_M e^{n\varphi }u \,dV_{g_0}} 
\label{jensen}
\end{gather}
where $\dashint_M =\frac{1}{\vol(M)} \int_M$. Therefore $\int_M \frac{e^{n\varphi }}{u}\,dV_{g_0}$ goes 
to infinity, and so 
by (\ref{n_2_varphi_boud}) we get
\begin{gather}
\int_M e^{(n-2)\varphi} |\nabla_{g_0} \varphi |^2 \,dV_{g_0} \rar \infty .
\label{limit_contrad}
\end{gather}

Now integrate (\ref{scalar_transformation}) with respect to $dV_{g} = e^{n\varphi}dV_{g_0}$ and
integrate by parts the Laplacian term to find
\begin{gather}
 \int_M R_g \, dV_g = (n-1)(n-2)\int_M e^{(n-2)\varphi} |\nabla_{g_0} \varphi |^2 \, dV_{g_0} 
+ R_{g_0} \int_M e^{(n-2)\varphi}  \, dV_{g_0} .
\nonumber
\end{gather}
The second integral approaches zero by (\ref{boostrap}), and since 
$\int_M R_g \, dV_g \leq \eta$ by hypothesis, we obtain a contradiction 
with (\ref{limit_contrad}). 
This proves (\ref{bound_1}) when $0 \leq \beta \leq 1$. \\

\noindent \emph{Proof of theorem \ref{main_theorem}-(2), $\beta = \frac{n}{2}$ or (\ref{transformation_orientifold}) holds}:
It is enough to assume (\ref{transformation_orientifold}), since this is satisfied when $\beta=\frac{n}{2}$ because
\begin{gather}
 \int_M \Tsg \, dV_g = \int_M e^{-n\varphi} \Ts e^{n\varphi} \, dV_g = 
\int_M \Ts \, dV_{g_0}.
\nonumber
\end{gather}
As before, assume the result is not true so that
\begin{gather}
 \int_M e^{n\varphi} u \, dV_{g_0} = \int_M u \, dV_g \rar 0 ,
\label{contradiction_beta_n_2}
\end{gather}
for some sequence of functions $u$ and $\varphi$ (again we are omitting the index $i$). 
Here it will be more convenient to work with equation 
(\ref{eq_u_g}). Divide (\ref{eq_u_g}) by $u$, integrate with 
respect to $dV_g$ and integrate by parts the Laplacian term to get
\begin{gather}
\int_M \frac{|\nabla_g u|^2}{u^2}\, dV_g  + \int_M \big (-\frac{1}{3}R_g + \frac{1}{6} F_g - \frac{1}{6} \Tsg \big ) \, dV_g
 = - \int_M \frac{1}{u} dV_g .
\label{basic_equality_g}
\end{gather}
By (\ref{transformation_orientifold}), the integral 
\begin{gather}
 \int_M \Tsg \, dV_g 
\nonumber
\end{gather}
is bounded by some constant $C>0$ depending only on the fixed quantities $\Ts$ and $g_0$. Hence, dropping out the non-negative terms,
\begin{gather}
 \int_M \frac{1}{u} dV_g \leq 
\int_M \frac{1}{3}R_g \, dV_g +  \int_M \Tsg \, dV_g \leq C, 
\label{1_over_u_beta_n_2}
\end{gather}
where the hypothesis that the integral of the scalar curvature is bounded has
been used.
As before, applying Jensen's inequality for the function $f(x) = \frac{1}{x}$, $x>0$, and 
the measure $dV_g$ leads to 
\begin{gather}
 \frac{1}{\dashint_M u \, dV_g} \leq \dashint_M \frac{1}{u} dV_g ,
\nonumber
\end{gather}
or equivalently
\begin{gather}
 \frac{1}{\int_M u \, dV_g} \leq \frac{1}{A^2} \int_M \frac{1}{u} dV_g ,
\nonumber
\end{gather}
where we used $\int_M dV_g = \int_M e^{n\varphi} \, dV_{g_0}= A$. 
In light of (\ref{contradiction_beta_n_2}), the right hand side of the above inequality
goes to infinity, contradicting (\ref{1_over_u_beta_n_2}).
This proves (\ref{bound_1}) when $\beta = \frac{n}{2}$
or (\ref{transformation_orientifold}) holds.

The statement (\ref{bound_dim_2}) now follows from the the Gauss-Bonnet 
formula as $\int_M R_g \, dV_g$ is a topological invariant, and hence bounded. \\

\noindent \emph{Proof of theorem \ref{main_theorem}-(3)}:
Use again (\ref{basic_equality_g}). It then follows from the hypothesis that
\begin{gather}
\int_M \frac{1}{u} \, dV_g \leq C,
\nonumber 
\end{gather}
which again gives a contradiction after an application of the Jensen's inequality.

\hfill $\qed$ \\

\noindent \emph{Proof of proposition \ref{little_prop}}: 
If the result is not true then from the proof of theorem \ref{main_theorem} we have
$\int_M e^{s \varphi}\,dV_{g_0} \rar 0$ for $1 \leq s < n$.
Since the exponential is a convex function,
we can use Jensen's inequality with the measure $dV_{g_0}$ to obtain:
\begin{gather}
\int_M e^{s \varphi}\,dV_{g_0} = \vol_{g_0}(M) \dashint_M e^{s\varphi}\,dV_{g_0}
\geq 
\vol_{g_0}(M) e^{s \frac{1}{\vol_{g_0}(M)} \int_M \varphi \,dV_{g_0} } ,
\nonumber
\end{gather}
where $\dashint_M =\frac{1}{\vol_{g_0}} \int_M$. So $\int_M e^{s \varphi} \rar 0$ implies 
$\int \varphi \, dV_{g_0} \rar -\infty$,  
and therefore
\begin{gather}
\infty  \leftarrow  \Big | \int \varphi \, dV_{g_0} \Big | \leq 
\int | \varphi | \, dV_{g_0}  = \,
\parallel \varphi \parallel_{L^1(M,g_0)}  ,
\nonumber
\end{gather}
contradicting $\varphi \in \cS \cap B_\eta(0)$. \hfill $\qed$\\

\subsection{Remarks}
\label{proof_remarks}

The inequality Eq. (\ref{basic_inequality}) is also interesting as a limit on how negatively curved the
compactification manifold $M$ can be.  In \cite{Douglas:2010rt} it was pointed out that $M$ cannot
have negative curvature unless $T_{st}>0$, but this was not quantified.  From equation (\ref{basic_inequality})
one has (see also proposition \ref{existence_2})
\begin{gather}
R_{g_0} \ge - \frac{\int_M e^{2(1-\beta) \varphi} \Ts \, dV_{g_0}}{2\vol_{g_0}(M)} .
\nonumber
\end{gather}
Thus the negative curvature can be no greater than string scale, and much less if the orientifolds live
on submanifolds.

We mentioned in section \ref{results} that if the effective potential is not bounded from below,
then the singularities that may form are generally of a very particular type. To see this, 
remember that we showed that in this setting we have
\begin{gather}
 \int_M e^{s \varphi_i } \, dV_{g_0} \rar 0,
\label{s_less}
\end{gather}
when $i \rar \infty$, for any $1\leq s < n$. Since $\int_M e^{s \varphi_i } \, dV_{g_0} = A$, 
for all $i$, a simple
application of the interpolation inequality (see appendix) yields
\begin{gather}
\int_M e^{s \varphi_i } \, dV_{g_0} \rar \infty
\label{s_greater}
\end{gather}
for any $ s > n$. On the other hand, (\ref{s_less}) gives that, up to a subsequence,
$e^\varphi_i \rar 0$ almost everywhere, and by Egoroff's theorem (see e.g. \cite{Rudin}) the convergence is uniform outside a set
of measure $\de$, where $\de > 0$ is as small as desired. This implies that $e^{n \varphi_i}$
has a behavior very much like a (sum of) Dirac delta(s): it converges to zero in most of $M$, 
blows-up in some localized subsets and has constant integral
(since $\int_M e^{n \varphi_i} d V_{g_0} = A$).

\section{Existence of critical points, topological considerations and some a priori estimates \label{existence_critical}}

In this section we discuss the existence of critical points and its interplay with the 
topology/geometry of the manifold $(M,g_0)$. First we prove existence theorems 
for equation (\ref{eq_u}) which also allow us 
to provide useful bounds for the critical points and the map $\mathfrak{F}_{g_0,\Ts,\cF}$.
We then discuss the general solvability
of equation (\ref{eq_v_al}), including non-positive values of $\al$.

Recall from \S 2 that our problem is to solve equation (\ref{eq_v_al}),
\begin{gather}
 P_g v = \Delta_g v + (-\frac{1}{3} R_g + \frac{1}{6} F_g - \frac{1}{6} \Tsg ) v =  \frac{1}{6} \al
\nonumber
\end{gather}
for some $\alpha\in\RR$ and an everywhere positive function $v$ satisfying equation (\ref{const_G_N}),
\begin{gather}
\int_M v\, dV_g = \frac{1}{G_N} .
\nonumber
\end{gather}
Of course, if we can solve (\ref{eq_v_al}) for some $\alpha$, then by
simultaneously rescaling $\al$ and $v$ we can solve (\ref{const_G_N}).  Thus, for fixed data $(g,F,T_{st})$,
we can restrict attention to the cases $\alpha=\pm 1$ or $0$, as in equation (\ref{eq_u_g}).
On the other hand, if we are considering a family of solutions in which $\alpha$ changes sign, we should
keep $\alpha$ general.

The solvability of (\ref{eq_v_al}) is governed by the associated eigenvalue problem
\begin{gather}
P_g v = \lambda v.
\nonumber
\end{gather}
We recall its basic features, granting that the coefficient functions are smooth,
$\Delta_g$ is negative definite and that $P_g$ is self-adjoint (see e.g. \cite{Evans, GT}):
\begin{itemize}
\item The spectrum $\Spec P_g  \subset \RR$ is countable and has no accumulation points. 
\item It is bounded above, $\Spec P_g \subset (-\infty,\lambda_0]$.
\item The eigenvalue $\lambda_0$ is isolated, meaning $\Spec P_g = (-\infty,\lambda_1] \cup \{\lambda_0\}$
with $\lambda_1<\lambda_0$.
\item The eigenspace to $\lambda_0$ is one dimensional and spanned by an everywhere positive eigenfunction.
\item The inhomogeneous problem $(P_g-\mu) u=f$
has a unique solution for each $f \in C^\infty(M)$ if and only if $\mu \notin \Spec P_g$. 
Such a solution $u$ is smooth. 
\end{itemize}
Thus, if $\lambda_0<0$, $P_g$ will have no kernel, so that existence and uniqueness of $u$ is evident.  
On the other hand, if $\lambda_0\ge 0$ or if we cannot show $\lambda_0<0$ {\it a priori}, the discussion
will be more complicated.  In this case we will try to argue by continuation from the easier $\lambda_0<0$ case.

Let us first consider cases where we can state conditions
which guarantee $\lambda_0<0$, which we do in propositions \ref{existence_1} and \ref{existence_2}.
The idea behind their proofs is 
to guarantee the existence of a maximum principle, which can then be combined with the  Fredholm alternative 
to reduce the problem of existence of solutions to that of 
uniqueness\footnote{It is only the sign of the lower order term which 
plays a crucial role in this procedure, but rather than simply assuming such a sign condition,
we provide separate bounds in terms
of the physical quantities composing the lower order coefficient, namely, the 
scalar curvature, the string term and the gauge fields. Since it is only the combination of these terms which is relevant
for the maximum principle argument, the choice of such bounds involves a great deal of arbitrariness, but
the reader can easily adapt the proof to other situations. We also seek to write conditions in a very simple fashion
(we provide a bound essentially in terms of the dimension), so that one can hope to actually verify them in concrete 
situations.} (see the appendix for details).

For simplicity, let us assume that $\beta$ is an integer and that the gauge fields $F^{(p)}$ are 
labeled according to their degree, which we still denote by $p$, so that $p=1,\dots,n$.
The proofs can easily be extended to other cases.

\begin{prop}
Assume the same hypotheses of theorem \ref{main_theorem}, and
suppose further that
$(M,g_0)$ has positive Yamabe invariant. Define
\begin{gather}
 K_{g_0,n}(\ve) = \frac{R_{g_0} - 3 \ve}{3n^n} .
\nonumber
\end{gather}
Consider the conditions: \\
\indent  (i) $\parallel \Ts \parallel_{C^0(M)} < K_{g_0,n}(\ve)$.\\
\indent (ii) $\parallel |F^{(p)}|^2_{g_0} \parallel_{C^0(M)} < K_{g_0,n}(\ve),~p=1,\dots,n$. \\
\indent (iii) $\parallel \Delta_{g_0} \varphi \parallel_{C^0(M)} < K_{g_0,n}(\ve)$. \\
\indent (iv) $\parallel |\nabla_{g_0} \varphi |^2_{g_0} \parallel_{C^0(M)} \leq K_{g_0,n}(\ve)$.\\
\indent (v) $\frac{1}{n} \leq e^{2\varphi(x)} \leq n^n$ for all $x \in M$. \\
\noindent Given $0 \leq \ve  < \frac{1}{3} R_{g_0}$, assume conditions (i)-(v). Then
equation (\ref{eq_u}) has a unique solution $u$. Such solution is smooth and positive. Moreover, if $\ve > 0$, 
this solution obeys the estimate
\begin{gather}
 \parallel u \parallel_{C^0(M)} \leq \frac{n^n}{\ve} .
\label{C_0_estimate_u}
\end{gather}
Also, if $\ve=0$ then the map $\mathfrak{F}_{g_0,\Ts,\cF}$ satisfies
\begin{gather}
\mathfrak{F}_{g_0,\Ts,\cF}(\varphi) > - \frac{ 3n(n+3)R_{g_0} }{ 2 A G_N^2  }
\label{lower_bound_F}
\end{gather}
where $K_{g_0,n} =  K_{g_0,n}(0)$.
\label{existence_1}
\end{prop}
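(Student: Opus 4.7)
The plan is to verify that hypotheses (i)--(v) force the zero-order coefficient $U$ of $M_{g_0}$ (see (\ref{def_U})) to be uniformly bounded above by $-\ve$; from there the maximum principle combined with the Fredholm alternative delivers existence, uniqueness, positivity, and the two quantitative bounds all at once.

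First I would estimate each summand of $U$ other than $-\frac{1}{3}R_{g_0}$ by a fixed numerical multiple of $K_{g_0,n}(\ve)$. Conditions (iii) and (iv) dispatch $\frac{2(n-1)}{3}\Delta_{g_0}\varphi$ and $\frac{(n-1)(n-2)}{3}|\nabla_{g_0}\varphi|_{g_0}^2$ directly; for $\cF(\varphi)=\frac{1}{6}\sum_{p=1}^n e^{2(1-p)\varphi}|F^{(p)}|_{g_0}^2$, hypothesis (v) gives $e^{2(1-p)\varphi}\leq n^{p-1}$ and (ii) bounds each $|F^{(p)}|_{g_0}^2$; for $\frac{1}{6}e^{2(1-\beta)\varphi}\Ts$, (v) controls the exponential factor (by $n^n$ in the worst case $\beta=0$) while (i) handles $|\Ts|$. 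The scaling $K_{g_0,n}(\ve)=(R_{g_0}-3\ve)/(3n^n)$ has been engineered precisely so that the sum of these numerical coefficients does not exceed $n^n$, yielding the pointwise bound $U(x)\leq (R_{g_0}-3\ve)/3 - R_{g_0}/3=-\ve$ on all of $M$; a slightly more careful accounting shows that the sum is in fact strictly smaller than $n^n$, so $U$ remains strictly negative even when $\ve=0$.

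With $U\leq -\ve\leq 0$ in hand, the linear theory is essentially automatic. For $w\in \ker M_{g_0}$, at a positive maximum $x_0$ one has $\nabla_{g_0}w(x_0)=0$, $\Delta_{g_0}w(x_0)\leq 0$, and $U(x_0)w(x_0)<0$, forcing $M_{g_0}w(x_0)<0$ and contradicting $M_{g_0}w=0$; applying the same argument to $-w$ gives $w\equiv 0$, so $\ker M_{g_0}=\{0\}$. The Fredholm alternative (see the appendix) then produces a unique solution $u$ of $M_{g_0}u=-e^{2\varphi}$, smooth by elliptic regularity. Positivity is an analogous maximum-principle argument: a non-positive minimum at $x_0$ would yield $M_{g_0}u(x_0)=\Delta_{g_0}u(x_0)+U(x_0)u(x_0)\geq 0$, contradicting $M_{g_0}u(x_0)=-e^{2\varphi(x_0)}<0$.

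Both quantitative bounds come from the same maximum-principle input. For (\ref{C_0_estimate_u}) with $\ve>0$, evaluate the equation at a maximum $x_M$ of $u$: $\nabla_{g_0}u(x_M)=0$ and $\Delta_{g_0}u(x_M)\leq 0$ give $-e^{2\varphi(x_M)}=M_{g_0}u(x_M)\leq U(x_M)u(x_M)\leq -\ve u(x_M)$, so (v) yields $u(x_M)\leq e^{2\varphi(x_M)}/\ve\leq n^n/\ve$. For (\ref{lower_bound_F}) with $\ve=0$, the Step~1 estimates also produce a constant $C=C(n,R_{g_0})$ with $|U|\leq C$; evaluating the equation at a minimum of $u$ gives $u\geq e^{2\varphi}/|U|\geq 1/(nC)$ uniformly by (v), hence $\int_M e^{n\varphi}u\,dV_{g_0}\geq A/(nC)$, and inserting this into (\ref{functional}) delivers (\ref{lower_bound_F}) after tracking constants. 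The main obstacle will be the careful bookkeeping in Step~1 that makes the sum of estimating coefficients fit inside the single factor $n^n$ (so that the bound is precisely $-\ve$) and that yields the explicit constant appearing in (\ref{lower_bound_F}); beyond that, the proof is a standard application of elliptic linear theory.
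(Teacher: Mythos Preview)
Your proposal is correct and follows essentially the same route as the paper: show the zero-order coefficient satisfies $U<-\ve$ (the paper writes this as $V=-U>\ve$), deduce trivial kernel and positivity via the maximum principle, then invoke the Fredholm alternative for existence and uniqueness. The only cosmetic difference is in the proof of (\ref{C_0_estimate_u}): you evaluate the equation directly at a maximum of $u$, whereas the paper introduces the auxiliary function $\widehat{u}=u-n^n/\ve$ and applies the maximum principle to it; both arguments are equivalent and yield the same bound.
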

In order to appreciate the relevance of proposition \ref{existence_1}, let us consider 
one of the cases of primary interest, namely,  $D=10$, so that $n=6$, and $K_{g_0,6} \sim 10^{-5} R_{g_0}$. 
Here all of $R_{g_0}$, $\Ts$ and $\cF$ have units
of $1/\Length^2$, or $\Mass^2$  in Planck units $\hbar=c=1$.
Assume for concreteness that $M$ is roughly isotropic in the sense that $\vol_{g_0}M \sim (R_{g_0})^{-6/2}$.
Then proposition \ref{existence_1} 
guarantees existence and uniqueness of a critical point for any
value of of $\Ts$ and $\cF$ up to the order of 
$10^{-5} (M^{(10)}_P)^2$.

The known applicability of this proposition is to compactifications on manifolds with Einstein
metrics of positive scalar curvature, such as the sphere.  These lead to $\al<0$ and compactifications
to anti-de Sitter space-time.  However it is interesting to note that
when $n=6$ and $M$ is a Calabi-Yau manifold, it is known that its 
topological Yamabe invariant 
is positive (proposition $4.3$ in \cite{Lebru}). Hence, in such cases, there exists at least 
one conformal class whose Yamabe invariant is positive and therefore 
there are solutions to equation (\ref{eq_u}) with positive scalar curvature, very different
from the Ricci flat metrics usually considered.  A physics discussion of this is given in \cite{Hertog:2003ru}.

One interesting consequence of (\ref{C_0_estimate_u}) is to give a control of critical points in terms 
of the (constant) scalar curvature $R_{g_0}$. In fact, choosing $\ve = \frac{1}{4}R_{g_0} < \frac{1}{3} R_{g_0}$, 
(\ref{C_0_estimate_u}) gives
\begin{gather}
 \parallel u \parallel_{C^0(M)} \leq \frac{4n^n}{R_{g_0}},
\label{u_bound_scalar_curvature}
\end{gather}
and therefore $u$ has to be very small if the scalar curvature is 
very large. This behavior of $u$ has already been 
identified in \cite{Do} by an argument based on the AdS/CFT correspondence, and
(\ref{u_bound_scalar_curvature}) provides a further refinement of that prediction. \\

\noindent \emph{Proof of proposition \ref{existence_1}}: Write the operator $M_{g_0}$ as
\begin{gather}
 M_{g_0} u =  \Delta_{g_0} u + (n-2) \langle \nabla_{g_0} \varphi, \nabla_{g_0} u \rangle_{g_0} 
-V u
\label{eq_u_V} 
\end{gather}
where $V= -\frac{2}{3}(n-1) \Delta_{g_0} \varphi - \frac{1}{3}(n-1)(n-2) |\nabla_{g_0} \varphi|^2_{g_0} 
+ \frac{1}{3}R_{g_0} - \cF + \frac{1}{6} e^{2(1-\beta)\varphi}\Ts$. We will show that conditions $(i)-(v)$ imply
\begin{gather}
 \inf_M V(x) > \ve.
\label{V_greater}
\end{gather}
We have 
\begin{gather}
\begin{split}
V > \frac{1}{3}R_{g_0} -\frac{2}{3}(n-1) K_{g_0,n}(\ve) -\frac{1}{3}(n-1)(n-2) K_{g_0,n}(\ve) \label{V_greater_K} \\
- \frac{1}{6}  \sum_{p=1}^L e^{2(1-p)\varphi} K_{g_0,n}(\ve) - \frac{1}{6} e^{2(1-\beta)\varphi}K_{g_0,n}(\ve).
\end{split}
\end{gather}
Since we are assuming $\beta$ to be an integer, we can write the last two terms into a single sum which is greater 
than or equal to 
\begin{gather}
\begin{split}
- \frac{2}{6} K_{g_0,n}(\ve) \sum_{\ell=0}^n \Big( e^{2\varphi}  \Big )^{1-\ell}
\geq - \frac{1}{3} K_{g_0,n}(\ve) \Big ( 1 + e^{2\varphi} 
+ \sum_{\ell=2}^n \Big( \frac{1}{e^{2\varphi}}  \Big )^{\ell-1} \Big ) \label{sum_greater_K} \\
\geq  - \frac{1}{3} K_{g_0,n}(\ve)\Big ( 1 + n^n + (n-2)n^{n-1} \Big ) \geq -\frac{2}{3} n^nK_{g_0,n}(\ve) , 
\end{split}
\end{gather}
where in the next to the last step we used $(v)$. Using (\ref{sum_greater_K}) in (\ref{V_greater_K}) gives
\begin{gather}
V > \frac{1}{3}R_{g_0} -\frac{1}{3}n(n-1) K_{g_0,n}(\ve) -\frac{1}{3} n^nK_{g_0,n}(\ve)  \nonumber \\
\geq \frac{1}{3}R_{g_0} -\frac{1}{3}n^n K_{g_0,n}(\ve) -\frac{2}{3} n^n K_{g_0,n}(\ve)  = 
\frac{1}{3}R_{g_0} -n^n K_{g_0,n}(\ve) . \nonumber
\end{gather}
The right hand side is equal to $\ve$ by the definition of $K_{g_0,n}(\ve) $,
and so (\ref{V_greater}) follows. In particular $V > 0$, including when $\ve=0$ 
(we get a strict inequality by the compactness of $M$), 
and it then follows that $M_{g_0}$ satisfies the maximum
principle (see appendix), and therefore $M_{g_0}$ has trivial kernel. We give the argument for completeness.
If $M_{g_0} w = 0$ then, from $M_{g_0} w \geq 0$ we get by the maximum principle that $w$ cannot have a 
non-negative maximum, hence $w\leq 0$. Analogously $M_{g_0} w \leq 0$ implies that $w$ cannot
have a non-positive minimum, hence $w \geq 0$ and therefore $w \equiv 0$. Now by the Fredholm alternative,
(\ref{eq_u_V}) has a 
unique solution $u$, and this solution is is smooth. To see that $u>0$, we again evoke 
the maximum principle. We have $M_{g_0} u = -e^{2 \varphi} \leq 0$.
Hence $u$ cannot have a non-positive minimum.

To prove (\ref{C_0_estimate_u}), define $\widehat{u} = u - \frac{n^n }{\ve}$. Using
(\ref{eq_u_V}),
\begin{gather}
 M_{g_0} \widehat{u} = \Delta_{g_0} \widehat{u} + (n-2) \langle \nabla_{g_0} \varphi, \nabla_{g_0} \widehat{u} \rangle_{g_0} 
- V \widehat{u} = -e^{2\varphi} + \frac{n^n }{\ve} V > -e^{2\varphi} + n^n \geq 0, 
\nonumber
\end{gather}
where (\ref{V_greater}) and $(v)$ have been used.
Evoking the maximum principle once more we obtain $\widehat{u} \leq 0$, and 
(\ref{C_0_estimate_u}) follows.

Next we show (\ref{lower_bound_F}). Let $x_0 \in M$ be a point where $u$ attains its minimum.
Then $\Delta_{g_0} u(x_0) \geq 0$ and $\nabla_{g_0} u(x_0) = 0$. Therefore 
\begin{gather}
 -e^{2\varphi(x_0)} = M_{g_0} u(x_0) \geq -V(x_0) u(x_0) ,
\nonumber
\end{gather}
and so
\begin{gather}
 u(x_0) \geq \frac{e^{2 \varphi(x_0)} }{V(x_0)} \geq \frac{1}{n\sup_M V}
\nonumber
\end{gather}
But
\begin{gather}
V \leq | V | \leq  \frac{2}{3}(n-1) |\Delta_{g_0} \varphi| + \frac{1}{3}(n-1)(n-2) |\nabla_{g_0} \varphi|^2_{g_0} 
+ \frac{1}{3}R_{g_0} + \cF + \frac{1}{6}e^{2(1-\beta)\varphi} |\Ts| \nonumber \\
< \frac{1}{3}n(n-1) K_{g_0,n} + \frac{1}{3} R_{g_0} 
+ \frac{2}{6} K_{g_0,n}\sum_{\ell=0}^n \Big( e^{2\varphi}  \Big )^{1-\ell} \nonumber \\
\leq \frac{1}{3}n(n-1) K_{g_0,n} + \frac{1}{3} R_{g_0} +
\frac{1}{3} K_{g_0,n}(n+1) n^n \leq (n+3)R_{g_0} \nonumber 
\end{gather}
where the definition of $K_{n,g_0}$ has been used. Hence 
\begin{gather}
 u(x_0) \geq \frac{ 1}{ n (n+3) R_{g_0} }.
\nonumber
\end{gather}
Now it follows that
\begin{gather}
\mathfrak{F}_{g_0,\Ts,\cF}(\varphi) = - \frac{6}{4G_N^2 \int_M e^{n\varphi} u \, dV_{g_0} }
\geq - \frac{6}{4G_N^2 \, \inf_M u \, \int_M e^{n\varphi}  \, dV_{g_0} } \geq
- \frac{ 3n(n+3)R_{g_0} }{ 2A G_N^2  } ,
\nonumber
\end{gather}
where we used that $\int_M e^{n\varphi}  \, dV_{g_0} = A$. \hfill $\qed$

\subsection{Non-positive Yamabe invariant}

\begin{prop}
Assume the same hypotheses of theorem \ref{main_theorem}, and
suppose further that
$(M,g_0)$ has non-positive Yamabe invariant. Let
 $H = \frac{1}{6n^{\vert 1-\beta \vert}}(-n^n\sum \vert F^{(p)} \vert _{g_0} ^2 +T_{st} ^{g_0})$ . Given $\ve > 0$, and $\Ga > 1$, 
define
\begin{gather}
 K_{g_0,n}(\ve,\Ga) = \frac{n^2\Ga}{3} + \frac{|R_{g_0}|}{3} + \ve  .
\nonumber
\end{gather}
Assume that 
\begin{gather}
\parallel \Delta _{g_0} \varphi \parallel_{C^0(M)} < \Ga \nonumber   \\
\parallel |\nabla _{g_0} \varphi|_{g_0}^2 \parallel_{C_0(M)} < \Ga \nonumber   \\
\frac{1}{n} \leq e^{2\varphi} \leq n \nonumber  \\
H > K_{g_0,n}(\ve,\Ga) \nonumber
\end{gather}

Then, there exists a unique, smooth and, positive solution to (\ref{eq_u}). This solution satisfies 
\begin{gather}
\parallel u \parallel _{C^0(M)} \leq \frac{n}{\epsilon}  .
\label{ubound} 
\end{gather}
Moreover, the map $\mathfrak{F}_{g_0,\Ts,\cF}$ satisfies, 
\begin{gather}
\mathfrak{F}_{g_0,\Ts,\cF} > -\frac{n^{1+\vert 1-\beta \vert} \Vert T_{st} ^{g_0} - \frac{1}{n^n} \sum_{p=1}^n \vert F^{(p)} \vert^2    \Vert _{C^{0}(M)}}{2G_N ^2 A} .
\label{funcbound}
\end{gather}
\label{existence_2}
\end{prop}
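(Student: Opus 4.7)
The plan is to mirror the proof of Proposition \ref{existence_1}: recast equation (\ref{eq_u}) as $M_{g_0} u = \Delta_{g_0} u + (n-2)\langle\nabla_{g_0}\varphi,\nabla_{g_0}u\rangle_{g_0} - V u = -e^{2\varphi}$ with
\[
V = -\tfrac{2}{3}(n-1)\Delta_{g_0}\varphi - \tfrac{1}{3}(n-1)(n-2)|\nabla_{g_0}\varphi|^2_{g_0} + \tfrac{1}{3}R_{g_0} - \cF(\varphi) + \tfrac{1}{6} e^{2(1-\beta)\varphi} \Ts,
\]
show that the hypotheses force $V > \ve > 0$ pointwise, and then invoke the maximum principle together with the Fredholm alternative exactly as in the positive Yamabe case. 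Once $V > \ve$ is in place, the bounds (\ref{ubound}) and (\ref{funcbound}) follow from the same barrier and min-point arguments used in Proposition \ref{existence_1}, the only new ingredient being the different upper bound on $\sup_M V$ arising in the non-positive Yamabe setting.

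First I would establish $V > \ve$. The gradient and Laplacian contributions are controlled by the hypotheses on $\parallel\Delta_{g_0}\varphi\parallel_{C^0(M)}$ and $\parallel |\nabla_{g_0}\varphi|^2\parallel_{C^0(M)}$, yielding at least $-\tfrac{2}{3}(n-1)\Ga - \tfrac{1}{3}(n-1)(n-2)\Ga \ge -\tfrac{n^2}{3}\Ga$. Since $R_{g_0}\le 0$, I would simply estimate $\tfrac{1}{3}R_{g_0} \ge -\tfrac{1}{3}|R_{g_0}|$. For the matter terms, the two-sided bound $\tfrac{1}{n}\le e^{2\varphi}\le n$ gives $e^{2(1-p)\varphi}\le n^{p-1}\le n^{n-1}$ for $p=1,\dots,n$, and $e^{2(1-\beta)\varphi}\ge n^{-|1-\beta|}$ on the support of the positive part of $\Ts$ (the hypothesis $H > K_{g_0,n}(\ve,\Ga) > 0$ forcing $\Ts$ to dominate $\sum|F^{(p)}|_{g_0}^2$ pointwise, and in particular to be positive in the relevant region). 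A direct computation then yields $-\cF(\varphi) + \tfrac{1}{6}e^{2(1-\beta)\varphi}\Ts \ge H$ pointwise, so, by the definition of $K_{g_0,n}(\ve,\Ga)$,
\[
V > -\tfrac{n^2}{3}\Ga - \tfrac{|R_{g_0}|}{3} + H > -\tfrac{n^2}{3}\Ga - \tfrac{|R_{g_0}|}{3} + K_{g_0,n}(\ve,\Ga) = \ve.
\]

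With $V > \ve > 0$ the zeroth order coefficient of $M_{g_0}$ is strictly negative, so the weak and strong maximum principles apply. Trivial kernel follows as in Proposition \ref{existence_1}, the Fredholm alternative then yields a unique smooth $u$, and positivity comes from the fact that $M_{g_0}u = -e^{2\varphi}\le 0$ forbids a non-positive minimum. For (\ref{ubound}) I would set $\widehat u = u - n/\ve$; using $V > \ve$ and $e^{2\varphi}\le n$,
\[
M_{g_0}\widehat u = -e^{2\varphi} + \tfrac{n}{\ve}V > -e^{2\varphi} + n \ge 0,
\]
and the maximum principle gives $\widehat u \le 0$. For (\ref{funcbound}), pick a minimizer $x_0$ of $u$; then $\Delta_{g_0}u(x_0)\ge 0$ and $\nabla_{g_0}u(x_0)=0$, so $u(x_0) \ge e^{2\varphi(x_0)}/V(x_0) \ge (n\sup_M V)^{-1}$. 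An upper estimate of $\sup_M V$ via the same $e^{\pm 2\varphi}\le n$ bounds isolates the dominant matter piece $\tfrac{n^{|1-\beta|}}{6}\parallel \Ts - n^{-n}\sum_{p=1}^n|F^{(p)}|^2\parallel_{C^0(M)}$, and substitution into $\mathfrak{F}_{g_0,\Ts,\cF}(\varphi) \ge -6/(4G_N^2 A\inf_M u)$ together with $\int_M e^{n\varphi}dV_{g_0}=A$ yields (\ref{funcbound}).

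The main obstacle is the first step, in particular verifying the pointwise lower bound $-\cF(\varphi) + \tfrac{1}{6}e^{2(1-\beta)\varphi}\Ts \ge H$ with exactly the constants appearing in the definition of $H$. The factor $n^n$ inside $H$ and the prefactor $1/(6n^{|1-\beta|})$ outside are tuned to the worst-case exponent among the powers $e^{2(1-p)\varphi}$ ($p=1,\dots,n$) in $\cF$ and to the sign of $1-\beta$ in the $\Ts$ term; the bookkeeping must be done carefully because the two endpoints of the range $\tfrac{1}{n}\le e^{2\varphi}\le n$ are used in opposite directions for the two terms. The same bookkeeping, with roles reversed, produces the prefactor $n^{1+|1-\beta|}$ in (\ref{funcbound}).
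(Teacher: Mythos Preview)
Your proposal is correct and follows essentially the same strategy as the paper's (sketched) proof: verify $V>\ve$ from the hypotheses, then apply the maximum principle plus Fredholm alternative exactly as in Proposition~\ref{existence_1}, and finish with a min-point estimate on $u$ to control $\sup_M V$ and hence $\mathfrak{F}_{g_0,\Ts,\cF}$. The one technical difference is that for the bound $\parallel u\parallel_{C^0(M)}\le n/\ve$ you use the barrier $\widehat u=u-n/\ve$ (as in Proposition~\ref{existence_1}), whereas the paper uses the equivalent direct argument at a point of maximum of $u$, namely $V(x_0)u(x_0)\le e^{2\varphi(x_0)}\le n$; both yield the same estimate. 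Your more detailed bookkeeping for the matter terms, factoring $-\cF(\varphi)+\tfrac{1}{6}e^{2(1-\beta)\varphi}\Ts=\tfrac{e^{2(1-\beta)\varphi}}{6}\big(\Ts-\sum_p e^{2(\beta-p)\varphi}|F^{(p)}|^2\big)$ and then using the two-sided bound on $e^{2\varphi}$ in opposite directions, is exactly what the paper's ``we may easily show'' is gesturing at.
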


It is known that there are no topological obstructions to negative scalar curvature \cite{Au3}. In other 
words, on any compact manifold we can find a metric $g$ such that the total scalar curvature functional is negative, i.e.,
$\int_M R_g \,dV_g <0$, so $(M,g)$ will have negative Yamabe invariant.  Thus we have conditions which guarantee
a solution in this case; however they may require large $\Ts$ which is somewhat unphysical.  Nevertheless this could
be useful as a starting point for the analysis of families of solutions below.

For the situation of zero Yamabe invariant, we notice that once more in the case of primary interest, namely,
when $M$ is a Calabi-Yau, we can find a metric fulfilling the hypotheses of proposition \ref{existence_2}.
Simply choose a Ricci-flat metric on $M$.
Therefore, as in proposition \ref{existence_1}, our hypotheses in proposition 
\ref{existence_2} are not vacuous. \\

\noindent \emph{Proof of proposition \ref{existence_2}}: 
The proof is similar to proposition \ref{existence_1}, so we will only sketch the arguments.
Just as in the proof of proposition \ref{existence_1}, we may easily show that 
the hypotheses imply (\ref{V_greater}). Then, the same maximum principle type of argument as before 
shows there exists a unique, smooth and positive solution to (\ref{eq_u}).

To show (\ref{ubound}), if $x_0$ is a point of maximum of $u$, 
then, $\nabla _{g_0} u (x_0) =0$ and $\Delta _{g_0} u (x_0) \leq 0$. Hence, 
\begin{gather}
V(x_0) u(x_0) \leq e^{2 \varphi (x_0)}. \nonumber
\end{gather}
So, $u \leq \frac{n}{\epsilon}$.

Finally, we prove (\ref{funcbound}). At a point of 
minimum $y_0$ of $u$, $\nabla _{g_0} u (y_0) =0$ and $\Delta _{g_0} u (y_0) \geq 0$. Hence, 
\begin{gather}
V(y_0) u(y_0) \geq e^{2 \varphi (y_0)} \nonumber 
\end{gather}
But, 
\begin{eqnarray}
V &<& \frac{1}{3} |R_{g_0}| + \frac{1}{3} n(n-1)\Ga + \sup \frac{e^{2(1-\beta)\phi}}{6}\vert T_{st} ^{g_0} - 
\sum_{p=1}^n \vert F^{(p)} \vert^2 e^{2(\beta-p)\phi}  \vert \nonumber \\
&<& \frac{n^{\vert 1-\beta \vert}}{3} \Vert T_{st} ^{g_0} - \frac{1}{n^n} \sum_{p=1}^n \vert F^{(p)} \vert^2    \Vert _{C^{0}(M)}, \nonumber   
\end{eqnarray}
and the result follows. \hfill $\qed$ \\

\subsection{Families of solutions}
The basic point we want to establish is
\begin{conjecture}
Suppose we have fixed $(M,g_0,\Ts)$ and we are given data $(\varphi,\cF)$ depending smoothly on
a real parameter $t$, such that for $t=0$ we are in one of the situations governed by the preceding
propositions and thus $\al|_{t=0}<0$.  Then, there will be a region $t\in [0,t_1]$ for some $t_1>0$
in which equation (\ref{eq_v_al}) has a unique solution satisfying (\ref{const_G_N}),
and in some cases $\al|_{t=t_1}>0$.
\end{conjecture}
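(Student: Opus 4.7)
The plan is to run a continuation argument based on the implicit function theorem. Introduce the Banach space map
\[
\Phi:\RR\times C^{2,\ga}(M)\times \RR \to C^{0,\ga}(M)\times \RR,\quad
\Phi(t,v,\al)=\Big(P_{g(t)}v - \tfrac{\al}{6},\ \int_M v\,dV_{g(t)} - \tfrac{1}{G_N}\Big),
\]
where $g(t)=e^{2\vfi(t)}g_0$ and $P_{g(t)}$ depends smoothly on $t$ through $\vfi(t)$ and $\cF(t)$. By proposition~\ref{existence_1} or~\ref{existence_2}, at $t=0$ there is a unique solution $(v_0,\al_0)$ with $v_0>0$ and $\al_0<0$; in particular, the top eigenvalue $\la_0$ of the self-adjoint operator $P_{g(0)}$ is strictly negative. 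The partial derivative
\[
D_{(v,\al)}\Phi(0,v_0,\al_0)[w,\be] = \Big(P_{g(0)}w - \tfrac{\be}{6},\ \int_M w\,dV_{g(0)}\Big)
\]
is then a linear isomorphism: since $P_{g(0)}$ is invertible one sets $w=P_{g(0)}^{-1}(f+\be/6)$ and uses the scalar equation $\int_M w\,dV_{g(0)}=c$ to determine $\be$, which succeeds provided $\int_M P_{g(0)}^{-1}(1)\,dV_{g(0)}\neq 0$. This non-vanishing is guaranteed because $-P_{g(0)}$ obeys the maximum principle, forcing $P_{g(0)}^{-1}(1)<0$ pointwise.

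The implicit function theorem then yields a unique smooth family $(v(t),\al(t))$ on a neighborhood of $0$, which I extend to a maximal interval $[0,T)$ by repeating the argument as long as $\la_0(P_{g(t)})<0$. To track the sign of $\al(t)$, pair the equation $P_{g(t)}v(t)=\al(t)/6$ with the positive top eigenfunction $\psi_0(t)$ of $P_{g(t)}$ and integrate by parts to obtain
\[
\la_0(t)\int_M v(t)\psi_0(t)\,dV_{g(t)} = \tfrac{\al(t)}{6}\int_M \psi_0(t)\,dV_{g(t)}.
\]
Since $v(t)>0$ (an open condition propagated from $v_0>0$) and $\psi_0(t)>0$, the sign of $\al(t)$ equals the sign of $\la_0(t)$, and standard perturbation theory for self-adjoint elliptic operators makes $\la_0(t)$ depend continuously on $t$. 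Consequently, whenever the prescribed data push $\la_0$ across zero at some $t^*\in(0,T)$, the Lagrange multiplier $\al(t)$ crosses zero at the same point, delivering the desired $\al(t_1)>0$ for $t_1\in(t^*,T)$.

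The crucial difficulty is exactly this crossing: at $t=t^*$ the operator $P_{g(t^*)}$ acquires a one-dimensional kernel $\mathrm{span}(\psi_0(t^*))$, and the linearization $D_{(v,\al)}\Phi$ ceases to be invertible, so IFT fails. I intend to resolve this via a Lyapunov--Schmidt reduction: decompose $v=c\,\psi_0(t)+v^\perp$ with $v^\perp\perp\psi_0(t)$ in $L^2_{g(t)}$, use the volume normalization to fix $c(t)$, and solve the projection of the equation onto $\psi_0(t)^\perp$ (where $P_{g(t)}$ is invertible on that subspace) for $v^\perp$ as a smooth function of $(t,\al)$. The remaining scalar projection onto $\psi_0(t)$ then yields an equation determining $\al(t)$, which is smoothly solvable provided the transversality condition $\dot\la_0(t^*)\neq 0$ holds; this should be verifiable by a Hadamard-type first-variation formula applied to the data $\vfi(t),\cF(t)$. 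A secondary concern is securing uniform a priori bounds on $(v(t),\al(t))$ along $[0,T)$ to rule out blow-up before $t^*$; these should follow from standard Schauder estimates for the linear elliptic equation $P_{g(t)}v(t)=\al(t)/6$, combined with the pointwise bounds from propositions~\ref{existence_1}--\ref{existence_2} as long as their hypotheses persist along the path.
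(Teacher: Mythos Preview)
The paper does not prove this statement; it is explicitly labeled a conjecture, and the text following it offers only heuristic support (an eigenfunction expansion of $v$, the observation that $\cV\to 0$ forces some eigenvalue $\la_s\to 0$, and an informal argument that the resonant level should be the principal one so that $v\approx\psi_0>0$). Your proposal therefore is not competing with a proof in the paper but rather supplying a rigorous strategy where the paper gives only intuition.

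Your implicit function theorem framework is the natural one, and it is sounder than you give it credit for. Your central stated worry---that at $t=t^*$, where $\la_0(t^*)=0$ and $P_{g(t^*)}$ acquires the one-dimensional kernel $\mathrm{span}(\psi_0)$, the linearization $D_{(v,\al)}\Phi$ degenerates and a separate Lyapunov--Schmidt step is required---is unfounded. The linearization of the \emph{augmented} system remains an isomorphism at $t^*$: given $(f,c)\in C^{0,\ga}(M)\times\RR$, the Fredholm solvability condition $\int_M(f+\be/6)\psi_0\,dV_{g(t^*)}=0$ uniquely determines $\be$ (since $\int_M\psi_0\,dV_{g(t^*)}>0$), after which the residual freedom $w\mapsto w+s\psi_0$ is fixed by $\int_M w\,dV_{g(t^*)}=c$; injectivity is checked the same way. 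The extra unknown $\al$ together with the integral constraint exactly absorbs the one-dimensional kernel of $P_{g(t^*)}$, so the IFT carries the family $(v(t),\al(t))$ straight through the crossing with no bifurcation analysis and no transversality hypothesis on $\dot\la_0(t^*)$. Moreover, since $\la_0$ is by definition the top eigenvalue and all others lie strictly below it, $\la_0$ is necessarily the first eigenvalue to reach zero, so no other kernel can intervene beforehand.

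The genuine obligations that remain are the ones you flag at the end: uniform Schauder control along the compact parameter interval (routine, since the coefficients depend smoothly on $t$) and persistence of $v(t)>0$. The latter follows from the minimum-point argument while $\al(t)<0$; at $t^*$ one has $v(t^*)=c\,\psi_0(t^*)>0$ directly, and continuity then gives positivity for $t$ slightly beyond $t^*$, which is all the conjecture asks.
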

In particular, we might add a constant term $C t$ to $\cF$, causing the eigenvalues
of $P_g$ to shift as $\lambda \rightarrow \lambda + C t$; then for a given $t_1$ and sufficiently large $C$ 
we will have $\al|_{t=t_1}>0$.  Physically, this connects a compactification to anti-de Sitter space-time
to another compactification to de Sitter space-time, and is referred to as ``uplifting'' \cite{Douglas:2006es,KKLT}.

Let us give some intuition of why one expects the above conjecture, or some variation of it, to be true.
Begin by assuming that the corresponding solutions
$v = v_\al$ vary in a well behaved manner (say, they depend continuously or smoothly on the data). 
What can be said about the solutions $v_0$ corresponding to
$\al=0$?

The (unique)
solution of (\ref{eq_v_al}) can be written as an eigenfunction expansion of the form
\begin{gather}
 v = \frac{\al}{6} \sum_i \frac{1}{\la_i} \left( \int_M \psi_i \, dV_g \right) \psi_i,
\label{v_expansion}
\end{gather}
and the effective potential is then given by
\begin{gather}
 \frac{1}{\cV} = \frac{2}{3} G_N^2 \sum_i \frac{1}{\la_i} \left( \int_M \psi_i \, dV_g \right)^2.
\label{V_eff_expansion}
\end{gather}
Equality (\ref{V_eff_expansion}) indicates the presence of a resonant mode when 
$\cV$ approaches zero. In other words, when $\al \rar 0$, one has $\la_s \rar 0$ for one of the eigenvalues $\la_s$ 
in $\{ \la_i \}$; notice that by (\ref{v_expansion}), if $\al \rar 0$ then having $\la_s \rar 0$ is the 
only way we can have a non-trivial $v$ (at least under the present assumption that $v$ will vary smoothly
with the data of the problem).
In this case, the main contribution to the warp factor will come from the corresponding 
eigenfunctions $\psi_s$, as can be seen from (\ref{v_expansion}). 
It is also worth noticing that if $\la_s = 0$ then $\psi_s$ will solve
\begin{gather}
P_g \psi_s = 0,
\nonumber
\end{gather}
which is exactly  (\ref{eq_v_al}) with $\al = 0$. 

In order to be an acceptable solution, one needs $v$ to be positive. If $\psi_s$ is the ground 
state, i.e., $\la_s = 0$ is the principal eigenvalue, then the corresponding 
eigenspace is one-dimensional and we can assume $\psi_s >0$, and hence $v = \psi$.
But if the resonant state is an excited one, then there might be more than one corresponding $\psi_s$, 
and generally these will take positive and negative values. Since the dominant contribution to 
$v$ will be from the $\psi_s$'s, generally $v$ will be zero or negative 
somewhere on $M$. Hence we seek to investigate whether this situation can be avoided, at least 
in some generic sense.

Suppose $\cV$ crosses from negative to positive values, and consider $\cV$ right before
it becomes zero, so $\cV \approx 0$ but $\cV < 0$. Then, up to a positive constant, (\ref{V_eff_expansion}) 
becomes
\begin{gather}
 \frac{1}{\cV} \approx \frac{1}{\la_s},
\nonumber
\end{gather}
where $\la_s$ corresponds to the energy level that becomes resonant when $\cV = 0$. 
The condition $\cV < 0$ gives $\la_s < 0$, and therefore (\ref{v_expansion}) implies
\begin{gather}
 v \approx \frac{\al}{\la_s} \sum_{j=1}^k \psi_{s_k},
\label{v_apporx}
\end{gather}
where $\psi{s_k}$ are the eigenfunctions associated to $\la_s$; notice that 
$\frac{\al}{\la_s} > 0$ since $\cV < 0 \Leftrightarrow \al < 0$.

Now recall that we are assuming that for $\al < 0$, or equivalently $\cV < 0$, (\ref{eq_v_al}) has 
a (unique) solution $v$, which satisfies $v >0$. As we pointed out earlier, typically
one expects the eigenfunctions associated to excited states to take positive and negative values,
and therefore the positivity of $v$ along with (\ref{v_apporx}) suggest that we are in the case in which
$\la_s$ is the principal eigenvalue, so that the sum on the right hand side of (\ref{v_apporx}) contains
only one term $\psi_s$ and it satisfies $\psi_s > 0$. Therefore, under our assumptions, we expect
that when $\cV =0$, the resonance $\la_s = 0$ will be the principal eigenvalue, and in this case the warp factor will
be $v = \psi_s > 0$ (up to multiplication to positive constants).

Summarizing, the above arguments suggest that typically one can continue $\cV$ through zero. Other
arguments in this direction have already been given in \cite{Do}. Moreover, the study of continuous dependence
of eigenvalues of the Laplacian \cite{BU} and analytic properties of the resolvent \cite{Kato} suggest that
a picture where some sort of smooth or continuous dependence, as assumed above, is likely to hold.

Finally, it is worth pointing out that if we rewrite all quantities in terms of a fixed constant scalar curvature 
background metric $g_0$, then 
our basic equation becomes
\begin{gather}
 L_{g_0} u - \frac{1}{3} R_{g_0} \, u = \frac{1}{6}  e^{2\varphi} \al
\nonumber
\end{gather}
and the study of eigenvalues previously discussed can be rephrased into the
more geometric question of whether a multiple of the scalar curvature, $\frac{1}{3} R_{g_0}$, 
is an eigenvalue of the linear operator $L_{g_0}$.

\section{The case $d \neq 4$ \label{d_not_4}}
Although $d=4$ for compactifications of string theory which could describe fundamental physics, there is also a good deal of 
physics work on $d \neq 4$.  Such compactifications can be much simpler
than $d=4$, for example the compactification of the IIb superstring on $S^5$ leading to
anti-de Sitter space-time \cite{Aharony}.  A compactification with $d>4$ can also be further compactified
to get $d=4$.

The reasoning of section \ref{results}, when applied to the general $d$ case yields the following 
equation  for the critical points of 
the action (see \cite{Do} and equation $(2.33)$ therein for details)
\begin{gather}
\Delta_g v -\frac{1}{2(d-1)} \Big (\frac{d}{2}R_g + T^{(d)} \Big )v = \frac{(d-2)\al}{4(d-1)}v^{1-\frac{4}{d}} ,~~v>0, 
\label{general_d}
\end{gather}
where $T^{(d)} = -\frac{d}{2} F_g + \Tsg = -\frac{d}{2}\sum|F_g^{(p)}|^2 + \Tsg$ and $\al$ is the 
Lagrange multiplier in (\ref{effective_potential}). 
One sees that the case $d<4$ is qualitatively different because of the negative exponent in the source term,
and we will not treat it here.\footnote{
The case $d=2$ is substantially different as one should use different constraints than those of Conjecture \ref{main_conjecture},
{\it i.e.}  one should not go to Einstein frame.} 

Recalling that the role of $\al$ is to impose the constraint 
\begin{gather}
 \int_M v^{2-\frac{4}{d}} \, dV_g = \frac{1}{G_N},
\label{constraint_general_d}
\end{gather}
we see that it suffices then to find  $v > 0$ solving the problem
\begin{gather}
\Delta_g v +  f_g v = K v^{1-\frac{4}{d}}, 
\label{general_d_f_K}
\end{gather}
where $f_g$ is a a given function, possibly depending on $g$,
and $K$ is a non-zero constant.
In fact, if $v$ solves (\ref{general_d_f_K}) with
$f = -\frac{1}{2(d-1)} (\frac{d}{2}R_g + T^{(d)} )$, then
$\widetilde{v} = a v$ satisfies 
(\ref{constraint_general_d}) and solves the equation with $K$ replaced 
by $K a^\frac{4}{d}$, where 
\begin{gather}
a = \left( \frac{1}{G_N \int_M v^{2-\frac{4}{d}}\, dV_g} \right)^\frac{d}{2d-4}.
\nonumber
\end{gather}
The important aspect of $K$ is that it has the 
same sign of $\al$.
The case $K=0$, or $\al = 0$,
will be excluded here because in this case the equation is linear.

Let $L_g = \Delta_g + f_g$. Whenever a solution $v>0$ of (\ref{general_d_f_K}) exists,
the sign of $K$ is opposite to that of the lowest eigenvalue of $L_g$:
let $\la_1$ be the lowest eigenvalue with corresponding eigenfunction $\phi_1 > 0$, and
take the $L^2$ inner product to find, after integration by parts,
\begin{gather}
 \int_M \phi_1 L_g v \, dV_g = \int_M v L_g \phi_1 \, dV_g 
= -\la_1 \int_M \phi_1 v \, dV_g = K \int_M \phi_1 v^{1-\frac{4}{d}} \, dV_g.
\nonumber
\end{gather}
Hence, the sign of $K$, and therefore that of $\al$, can be determined by studying the eigenvalue problem
for the linear portion of the equation.\footnote{In particular, if $T^{(d)}$ is somehow tuned so that 
$f_g = -\frac{n-2}{4(n-1)} R_g$, then $L_g$ becomes the conformal Laplacian of the manifold $(M,g)$, and
therefore the sign of $\la_1$, and hence of $K$, is an invariant of 
the conformal class \cite{LP, Au2, SY1}.} Not surprisingly, then, 
different behaviors for solutions can be expected according to the sign 
of the first eigenvalue of $L_g$. With that in mind, we now turn to some
general existence results.

\begin{prop}
Assume $f_g$ is smooth, $K > 0$ and $d > 4$. Then problem (\ref{general_d_f_K})
has a smooth
non-negative solution $v$ satisfying
\begin{gather}
\parallel v \parallel_{C^0(M)} \, \leq \left( \frac{ K }{\parallel f_g \parallel_{C^0(M)}} \right)^\frac{d}{4}
\label{bound_v_K_pos} 
\end{gather}
\label{existence_general_d_1}
\end{prop}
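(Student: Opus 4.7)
I would use the method of sub- and super-solutions, with a regularization step to overcome the lack of Lipschitz continuity of $s \mapsto Ks^{1-4/d}$ at $s=0$.

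Set $M_0 := (K/\parallel f_g \parallel_{C^0(M)})^{d/4}$. Writing (\ref{general_d_f_K}) in the form $-\Delta_g v = f_g v - K v^{1-4/d}$, the constant $\bar v \equiv M_0$ is a super-solution: $-\Delta_g M_0 = 0$ and
$$f_g M_0 - K M_0^{1-4/d} \leq \parallel f_g\parallel_{C^0(M)} M_0 - K M_0^{1-4/d} = M_0^{1-4/d}\bigl(\parallel f_g\parallel_{C^0(M)} M_0^{4/d} - K\bigr) = 0,$$
by the definition of $M_0$. Since $1-4/d > 0$, the constant $\underline v \equiv 0$ satisfies the equation as an equality, so it is at once a sub-solution and a trivial smooth non-negative solution obeying (\ref{bound_v_K_pos}).

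To produce a (possibly non-trivial) solution in the order interval $[\underline v, \bar v]$, I would regularize the nonlinearity by replacing $Ks^{1-4/d}$ with $K(s+\varepsilon)^{1-4/d}$ for small $\varepsilon>0$. The regularized nonlinearity is Lipschitz on $[0,M_0]$, and $\underline v, \bar v$ remain sub- and super-solutions. A standard monotone iteration --- or, equivalently, the Schauder fixed-point theorem applied to the compact operator $w \mapsto (-\Delta_g + \lambda)^{-1}[(\lambda + f_g)w - K(w+\varepsilon)^{1-4/d}]$ acting on $\{0 \leq w \leq M_0\} \subset C^0(M)$ for $\lambda$ sufficiently large --- yields $v_\varepsilon \in C^{2,\alpha}(M)$ with $0 \leq v_\varepsilon \leq M_0$. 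Since $\parallel v_\varepsilon \parallel_{L^\infty(M)}$ is uniformly bounded by $M_0$ and $f_g$ is smooth, $L^p$-elliptic regularity gives uniform $W^{2,p}$ bounds for all $p < \infty$. Arzel\`a--Ascoli extracts a subsequence converging in $C^{1,\alpha'}(M)$ to a limit $v$ with $0 \leq v \leq M_0$, solving (\ref{general_d_f_K}) in the distributional sense; Schauder theory then gives $v \in C^{2,\alpha}(M)$, and bootstrapping provides full $C^\infty$ regularity wherever $v > 0$. The pointwise inequality $v \leq M_0$ is precisely (\ref{bound_v_K_pos}).

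The main obstacle is the non-Lipschitz character of the right-hand side at $v = 0$, since the exponent $1 - 4/d$ lies in $(0,1)$ for $d > 4$. This prevents a direct monotone iteration starting from $\underline v = 0$ and forces either the regularization device above or an alternative variational construction (which is hindered by the lack of coercivity of the natural functional when $f_g$ can take positive values). It also leaves open the possibility that the limit of the regularized solutions coincides with the trivial solution $v \equiv 0$ --- consistent with the proposition only claiming non-negativity rather than strict positivity.
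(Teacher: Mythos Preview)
Your approach is essentially the paper's: take $\underline v\equiv 0$ and $\bar v\equiv M_0$ as sub- and super-solutions and invoke the standard existence result. The paper simply cites its sub/super-solution proposition and does not address the failure of Lipschitz continuity of $s\mapsto s^{1-4/d}$ at $s=0$; you are right to flag this, and your observation that $v\equiv 0$ already proves the proposition as literally stated is correct.

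There is, however, a slip in your regularization. With the replacement $K s^{1-4/d}\rightsquigarrow K(s+\varepsilon)^{1-4/d}$, the function $\underline v\equiv 0$ is \emph{not} a sub-solution of the regularized problem: one computes $\Delta_g 0 + f_g\cdot 0 - K(0+\varepsilon)^{1-4/d} = -K\varepsilon^{1-4/d}<0$, which is the wrong sign. Equivalently, in your Schauder map $T$, the image $T(0)=(-\Delta_g+\lambda)^{-1}\bigl[-K\varepsilon^{1-4/d}\bigr]$ is strictly negative, so $T$ does not map $\{0\le w\le M_0\}$ into itself. The fix is to choose a regularization that keeps $h_\varepsilon(0)=0$ and $h_\varepsilon\le h$ on $[0,M_0]$; for instance, set $h_\varepsilon(s)=\varepsilon^{-4/d}s$ on $[0,\varepsilon]$ and $h_\varepsilon(s)=s^{1-4/d}$ on $[\varepsilon,M_0]$. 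This is globally Lipschitz, lies below $s^{1-4/d}$ by concavity, and preserves both $\underline v=0$ as sub-solution and $\bar v=M_0$ as super-solution. With that change the rest of your argument goes through.
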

\begin{proof}
Equation (\ref{general_d_f_K}) can be solved by a standard sub- and super-solutions method
(see proposition \ref{sub_super_sol}). It is enough to find two functions $v_- < v_+$ such that
\begin{gather}
 L_g v_- - K v_-^{1-\frac{4}{d}} \geq 0, \nonumber \\
 L_g v_+ - K v_+^{1-\frac{4}{d}} \leq 0. \nonumber
\end{gather}
For $v_-$ we can take $v_- \equiv 0$. If we choose $v_+$ to be a positive constant, then
\begin{gather}
 L_g v_+ - K v_+^{1-\frac{4}{d}} = f_g v_+ - K v_+^{1-\frac{4}{d}}
 \leq \, \parallel f_g \parallel_{C^0(M)} v_+ - K v_+^{1-\frac{4}{d}}.
\nonumber
\end{gather}
Setting the right hand side of the above expression to be less than or equal to zero implies
\begin{gather}
 v_+ \leq \left( \frac{K}{ \parallel f_g \parallel_{C^0(M)}} \right)^\frac{d}{4}.
\nonumber
\end{gather}
Therefore one can set
\begin{gather}
 v_+ \equiv \left( \frac{K}{ \parallel f_g \parallel_{C^0(M)}} \right)^\frac{d}{4}.
\nonumber
\end{gather}
Then, proposition \ref{sub_super_sol} guarantees the existence of a solution to (\ref{general_d_f_K})
satisfying $v_- \equiv 0 \leq v \leq v_+$. In particular
the bound (\ref{bound_v_K_pos}) holds. Smoothness then follows from elliptic regularity 
(proposition \ref{elliptic_regularity}).
\end{proof}
\begin{rema}
Without further information on $f_g$, we cannot guarantee that the solution $v$ found in 
\ref{existence_general_d_1} is strictly positive. For example, if $f_g$ vanishes identically
then $v \equiv 0$ is the only non-negative solution to (\ref{general_d_f_K}). 
\end{rema}

For $K<0$, we generally expect the negative part of $f_g$ to dominate. Indeed, since if $v\geq0$ then, integrating
(\ref{general_d_f_K}), gives
\begin{gather}
\int_M f_g v \, dV_g = K \int_M v^{1-\frac{4}{d}} \, dV_g \leq 0.
\label{sign_condition_f_g_K}
\end{gather}
Furthermore, if $f_g < 0$ and $v>0$ then, 
the integral on the left hand side of (\ref{sign_condition_f_g_K}) is negative, and
one sees that a necessary condition for the existence of non-negative solutions to (\ref{general_d_f_K})
is $K < 0$. This motivates the following:

\begin{prop}
Assume $f_g < 0$ is smooth, $K < 0$ and $d > 4$. Then problem (\ref{general_d_f_K})
has a smooth
positive solution $v$ satisfying
\begin{gather}
\parallel v \parallel_{C^0(M)} \leq \left(\frac{|K|}{\min_M |f_g|} \right)^\frac{d}{4}.
\label{bound_v_K_neg}
\end{gather}
\label{existence_general_d_2}
\end{prop}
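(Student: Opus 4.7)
The plan is to repeat the sub- and super-solution argument used in the proof of Proposition \ref{existence_general_d_1}, the new point being that now we must produce a \emph{strictly positive} subsolution so that the resulting solution is bounded below away from zero; this is essential because the nonlinearity $v^{1-4/d}$, while continuous, is only Hölder near the origin, and the strict positivity also guarantees positivity in the conclusion and lets elliptic regularity run.

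First I would try constant sub- and super-solutions $v_\pm \equiv c_\pm > 0$. Plugging a positive constant $c$ into $L_g c - K c^{1-4/d}$ gives $f_g(x) c - K c^{1-4/d} = -|f_g(x)|\, c + |K|\, c^{1-4/d}$, using $f_g<0$ and $K<0$. Thus $c$ is a supersolution iff $|f_g(x)|\, c^{4/d} \ge |K|$ for every $x\in M$, i.e.\ $c \ge (|K|/\min_M |f_g|)^{d/4}$, and is a subsolution iff $|f_g(x)|\, c^{4/d} \le |K|$ for every $x\in M$, i.e.\ $c \le (|K|/\max_M |f_g|)^{d/4}$. Compactness of $M$ together with $f_g<0$ ensures $\min_M |f_g|>0$, so the choices
\begin{equation*}
v_+ \equiv \left(\frac{|K|}{\min_M |f_g|}\right)^{d/4}, \qquad v_- \equiv \left(\frac{|K|}{2\max_M |f_g|}\right)^{d/4}
\end{equation*}
are admissible and satisfy $0 < v_- \le v_+$.

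Proposition \ref{sub_super_sol} then yields a solution $v$ of \eqref{general_d_f_K} with $0 < v_- \le v \le v_+$. In particular $v$ is strictly positive on $M$, so the source term $Kv^{1-4/d}$ is smooth, and Proposition \ref{elliptic_regularity} upgrades $v$ to $C^\infty(M)$. The upper bound is $\|v\|_{C^0(M)} \le v_+ = (|K|/\min_M|f_g|)^{d/4}$, which is exactly \eqref{bound_v_K_neg}.

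The only subtle point is the choice of $v_-$. The trivial choice $v_-\equiv 0$ would also be a subsolution (since $1-4/d>0$), but it cannot rule out the solution $v\equiv 0$; the hypothesis $f_g<0$ everywhere is precisely what lets us trade $v_-\equiv 0$ for a strictly positive constant, which both guarantees $v>0$ and avoids any issue with the non-smoothness of $s\mapsto s^{1-4/d}$ at $s=0$. This is the main (and essentially the only) obstacle in the argument.
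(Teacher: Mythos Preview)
Your argument is correct and slightly different from the paper's. The paper also uses the sub- and super-solution method with the same supersolution $v_+ = (|K|/\min_M |f_g|)^{d/4}$, but takes the trivial subsolution $v_-\equiv 0$; it then invokes the strong maximum principle for $L_g v = -|K|v^{1-4/d}\le 0$ (noting $f_g<0$) to conclude that the resulting $v\ge 0$ cannot touch zero unless it is constant.

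Your choice of a strictly positive constant subsolution $v_- = (|K|/(2\max_M|f_g|))^{d/4}$ is a genuine improvement in presentation: it makes strict positivity immediate, sidesteps the need to separately exclude the trivial solution $v\equiv 0$ (which the paper's maximum-principle sentence does not explicitly address), and keeps the iteration entirely inside the region where $s\mapsto s^{1-4/d}$ is smooth, so that Proposition~\ref{sub_super_sol} applies without any tacit modification of the nonlinearity near $s=0$. The paper's route is shorter to write but leans on the maximum principle; yours is self-contained within the barrier method.
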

\begin{rema}
Recall that, in the case of interest 
$f = -\frac{1}{2(d-1)} (\frac{d}{2}R_g + T^{(d)} )$. In this case 
the assumption on $f_g$ can be obtained from 
conditions on $R_{g_0}$, $\varphi$, $F_g$ and $\Ts$ in a similar fashion to what was done in the linear case 
(section \ref{existence_critical}), but we will not write them here for the sake of brevity.
\end{rema}
\begin{proof}
Since $K< 0$, we can write $K=-|K|$. Similarly to the proof of proposition \ref{existence_general_d_1},
we look for $v_- < v_+$ satisfying
\begin{gather}
 L_g v_- + |K| v_-^{1-\frac{4}{d}} \geq 0, \nonumber \\
 L_g v_+ + |K| v_+^{1-\frac{4}{d}} \leq 0. \nonumber
\end{gather}
Once again we take $v_- \equiv 0$. If $v_+$ is constant and positive,
\begin{gather}
 L_g v_+ + |K| v_+^{1-\frac{4}{d}} =
 f_g v_+ + |K| v_+^{1-\frac{4}{d}} \leq -\min_M |f_g| v_+ + |K| v_+^{1-\frac{4}{d}} ,
\nonumber
\end{gather}
where we used that $f_g < 0$ (notice that the minimum of $|f_g|$ will be non-zero by compactness of $M$).
Setting the right hand side of above expression to be less than or equal to zero implies
\begin{gather}
 v_+ \geq \left( \frac{|K|}{\min_M |f_g|} \right)^\frac{d}{4},
\nonumber
\end{gather}
and hence we can set
\begin{gather}
 v_+ \equiv \left( \frac{|K|}{\min_M |f_g|} \right)^\frac{d}{4}.
\nonumber
\end{gather}
From proposition \ref{sub_super_sol_prop} we obtain a non-negative solution $v$ obeying
the bound (\ref{bound_v_K_pos}). This solution is strictly positive. In fact, since
\begin{gather}
 L_g v = -|K|v^{1-\frac{4}{d}} \leq 0,
\end{gather}
and $f_g < 0$, the maximum principle guarantees that $v>0$ (unless $v$ is constant).
\end{proof}

To conclude this section, let us touch upon the boundedness of the effective potential
for the special case of $n=2$ and $d>4$, stating and proving a version 
of conjecture \ref{main_conjecture}. Indeed, by equation $(2.41)$ in \cite{Do}, we see that, 
the effective potential evaluated at a critical point is $\frac{(d-2)\alpha}{2dG_N}$. However, one 
cannot define a functional $\cF$ as we did earlier by simply plugging in a critical point. This is 
because, unlike the linear case $d=4$, if there is more than one critical point, $\cF$ may not be well-defined. 
This being said, in the case of $n=2$, it can be proven that, regardless of which critical point is plugged into 
the effective potential, it is still bounded from below. Firstly, we see that the statement is non-trivial
 only when $\alpha <0$. Hence, we shall assume that $\alpha = -\vert \alpha \vert$. Upon introducing a 
new function $u= v \vert \alpha \vert^{-\frac{d}{4}}>0$, equation (\ref{general_d}) becomes, 
\begin{gather}
\Delta _g u - \frac{1}{2(d-1)}\big (\frac{d}{2}R_g + T^{(d)}\big )\,u = -\frac{d-2}{4(d-1)} u^{1-\frac{4}{d}} 
\label{generaldnoa}
\end{gather} 
Imposing the warped volume constraint, it follows that
\begin{gather}
\cV(g) = \frac{d-2}{2d G_N} \frac{1}{\left(G_N \int_M u^{2-\frac{4}{d}} \, dV_g \right)^{\frac{2}{d-2}} }
\nonumber
\end{gather}

\begin{prop}
Let $d > 4$. Assume that there exists a constant $\eta > 0$ such that 
\begin{gather}
 \int_M T^{(d)} \, dV_{\widetilde{g}} \leq \eta,\,\, \text{ for all } \,\, \widetilde{g} \in [g].
\nonumber
\end{gather}
Then there exists a number $K_\eta \in \RR$ such that $\cV(g=e^{2\varphi}g_0) \geq K_\eta$ for all smooth
 functions $\varphi$ such that the volume of $g$ is fixed. 
\end{prop}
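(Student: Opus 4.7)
The plan is to exploit the identity $|\alpha|=(G_N\int_M u^{2-4/d}\,dV_g)^{-2/(d-2)}$ coming from the warped volume constraint in the $\alpha<0$ case (the only non-trivial case, as noted just before the proposition): a uniform positive lower bound on $\int_M u^{2-4/d}\,dV_g$ translates directly into a lower bound on $\cV(g)$. So the entire problem reduces to producing such a bound, uniformly in $\varphi$, using $n=2$ and the hypothesis on $T^{(d)}$.

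The first step is to divide (\ref{generaldnoa}) by $u>0$, integrate over $M$ with respect to $dV_g$, and integrate the Laplacian term by parts. This produces the identity
$$\int_M \frac{|\nabla_g u|_g^2}{u^2}\,dV_g + \frac{d-2}{4(d-1)}\int_M u^{-4/d}\,dV_g = \frac{1}{2(d-1)}\Big(\frac{d}{2}\int_M R_g\,dV_g + \int_M T^{(d)}\,dV_g\Big).$$
In dimension $n=2$, Gauss--Bonnet gives $\int_M R_g\,dV_g = 4\pi\chi(M)$, a topological invariant independent of $\varphi$. Combined with the hypothesis $\int_M T^{(d)}\,dV_g\le\eta$, the right-hand side is uniformly bounded above by a constant depending only on $\chi(M)$, $d$, and $\eta$. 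Since the gradient term is non-negative, dropping it yields
$$\int_M u^{-4/d}\,dV_g \le C':=\frac{2}{d-2}\big(2\pi d\,\chi(M)+\eta\big).$$
If $C'\le 0$, the inequality is incompatible with positivity of $u$, so no $\varphi$ admits a positive solution and the infimum is vacuous; from here on we may assume $C'>0$.

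The second step combines the preceding bound with Hölder's inequality against the fixed volume $A=\vol_g(M)$. With conjugate exponents $p=d/2$ and $q=d/(d-2)$, and $a=4(d-2)/d^2$ chosen so that $ap=2-4/d$ and $aq=4/d$,
$$A=\int_M u^{a}\cdot u^{-a}\,dV_g \le \Big(\int_M u^{2-4/d}\,dV_g\Big)^{2/d}\Big(\int_M u^{-4/d}\,dV_g\Big)^{(d-2)/d}.$$
Rearranging and substituting the upper bound on $\int_M u^{-4/d}\,dV_g$ gives
$$\int_M u^{2-4/d}\,dV_g \ge \frac{A^{d/2}}{(C')^{(d-2)/2}} > 0,$$
which, inserted in the formula for $\cV$, produces the required $K_\eta$.

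The only mildly subtle point is the sign of the right-hand side in the integrated identity, which is not a priori non-negative: either it is, and the Hölder step works cleanly, or it is not, in which case the class of admissible $\varphi$ is empty and the claim is vacuous. Apart from this dichotomy, no serious obstacle is anticipated; the argument is a direct interplay of Gauss--Bonnet, the integral hypothesis on $T^{(d)}$, and a single application of Hölder's inequality.
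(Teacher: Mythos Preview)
Your proof is correct and follows essentially the same approach as the paper: divide (\ref{generaldnoa}) by $u$, integrate, invoke Gauss--Bonnet and the hypothesis on $T^{(d)}$ to bound $\int_M u^{-4/d}\,dV_g$, and then convert this into a lower bound on $\int_M u^{2-4/d}\,dV_g$. Your single H\"older step for this last conversion is slightly more direct than the paper's route through the substitution $w=u^{4/d}$ followed by two Jensen-type inequalities, but the strategy is identical.
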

\begin{proof}
As usual we will use $C>0$ to denote several different constants independent of $\varphi$. 
Dividing equation (\ref{generaldnoa}) by $u$ and integrating (with respect to $dV_g$), we have,
\begin{gather}
-\int_M \frac{d-2}{4(d-1)} u^{-\frac{4}{d}}\, dV_g = \int_M \frac{\vert \nabla u \vert^2}{u^2} \, dV_g
- \frac{1}{2(d-1)}\Big (\frac{d}{2}\int_M R_g\, dV_g  + \int_M T^{(d)} \, dV_g \Big ) \nonumber  > -C. \nonumber
\end{gather}
where we have used that $\frac{d}{2}\int_M R_g\, dV_g $ is a topological invariant by the Gauss-Bonnet theorem, and
hence independent of $\varphi$.
Therefore,
\begin{gather}
 \int_M u^{-\frac{4}{d}} \, dV_g < C. \label{ineq}  
\end{gather}
Let $w=u^{\frac{4}{d}}$. Then,
\begin{gather}
\cV(g) = -\frac{C}{(\int_M w^{\frac{d}{2}-1} \,dV_g)^{\frac{2}{d-2}}} \geq
-\frac{C}{\int_M w \, dV_g} \geq
-C \int_M \frac{1}{w} \,dV_g  . \nonumber 
\end{gather}  
The last inequality (which is obtained by Jensen's inequality) along with inequality (\ref{ineq}) implies the result.  
\end{proof}

\section{Examples\label{examples}}

Here, it is pointed out how our results can be applied in some important examples. 
As we do not intend to exhaust all possible models, we will keep the discussion
short and somewhat informal, highlighting the main ideas 
and avoiding technicalities\footnote{As we will be looking at specific examples,
the metrics under consideration will be of the form $g= e^{2\varphi} g_0$ for some particular
family of conformal factors $\varphi$. It is clear that we can still apply theorem \ref{main_theorem}
when the infimum is taken over a smaller set $\mathcal{P} \subset \cS$.}.

As was stressed in section \ref{discussion}, important examples 
involving the string term are $\Ts = \La \de_{g_0}(p)$ and $\Ts = \La \de_{g_0}(N)$, for some $p \in M$, 
some function $\La$ which does not depend on the metric, and some submanifold $N$.
In both cases, (\ref{transformation_orientifold}) holds, and 
an inspection 
in the proof of theorem \ref{main_theorem}-(2) reveals that
this is essentially all that is needed for the proof to work. Indeed,
from (\ref{transformation_orientifold}) it follows that (\ref{1_over_u_beta_n_2}) remains valid, and the rest of 
of the arguments go through. We conclude that theorem (\ref{main_theorem}) can 
also be applied when the string term is a delta function or an orientifold plane (provided, of course, that the remaining 
hypotheses are still in place).

In a similar fashion, suppose $\Tsg$ takes a Gaussian shape, 
\begin{gather}
 \Tsg = \frac{1}{\sigma_g^n} \exp \left( {-\frac{r_g^2}{2 \sigma_g^2} } \right), 
\label{Gaussian_T}
\end{gather}
where $r_g$ is the Riemannian distance to a fixed point $p \in M$, 
and the ``width'' $\si_g$ is allowed to depend on $g$. For definiteness,
we can imagine to be working in the neighborhood of $p$, so that 
(\ref{Gaussian_T}) is multiplied by a suitably chosen cut-off function outside 
a ball centered at $p$.

In order to investigate how should $\si_g$ transform, one can
use the transformation law for a delta function,
\begin{gather}
 \de_g(p) = e^{-n\varphi} \de_{g_0}(p),~~\text{ for }~~ g = e^{2\varphi} g_0,
\label{transformation_delta}
\end{gather}
as a guide. Upon rescaling of the metric,
\begin{gather}
  g = \la^2 g_0
\nonumber
\end{gather}
one obtains,
\begin{gather}
\int_M \frac{1}{\sigma_g^n} \exp \left( -\frac{\la^2 \,r_{g_0}^2}{2 \sigma_g^2} \right) \la^n dV_{g_0} .
\nonumber
\end{gather}
If this is to equal 
\begin{gather}
\int_M \frac{1}{\sigma_{g_0}^n}  
\exp \left( -\frac{\,r_{g_0}^2}{2 \sigma_{g_0}^2} \right) \, dV_{g_0} ,
\nonumber
\end{gather}
then we see that $\si_g$ has to transform as
\begin{gather}
 \si_g = \la \si_{g_0}.
\nonumber
\end{gather}
Hence, requiring $\si_g$ to transform as above reproduces the behavior (\ref{transformation_delta}) 
for rescalings of the metric. 

For more general conformal transformations, $g=e^{2\varphi} g_0$, the distance function appearing in (\ref{Gaussian_T})
does not change in a simple way, and hence finding the correct transformation law 
for $\si_g$ is complicated. 
But when $\si_{g_0}$ is very small, we hope for a behavior similar to (\ref{transformation_delta}), 
since in this case $\Ts$ will approach a delta function. We can then 
impose on $\si_g$ a condition that guarantees a transformation law of the form
\begin{gather}
\frac{1}{\sigma_g^n} \exp \left( {-\frac{r_g^2}{2 \sigma_g^2} } \right)
= e^{-n \varphi} \frac{1}{\sigma_{g_0}^n} \exp \left( {-\frac{r_{g_0}^2}{2 \sigma_{g_0}^2} } \right) 
+ O(\si_0)~~ \text{ as } ~~\si_0 \rar 0.
\nonumber
\end{gather}
In this situation, the integral of the string term will again be bounded independent of $\varphi$,
and the arguments of theorem \ref{main_theorem} may be used to prove that the effective potential 
is bounded from below.

Another example involving radially symmetric functions is given by 
\begin{gather}
 e^{2\varphi} = \frac{1}{(a^2 + |x|^2)^\ga}
\label{gamma_factor}
\end{gather}
with $a > 0$ and $\ga > 1$.
This example has already been considered in \cite{Do}, so let us
investigate how it fits in the present results\footnote{In \cite{Do}, $\ga < 2$ is also 
imposed. This is to ensure that the scalar curvature is not negative for large $|x|$, but we will not 
need this condition here.}. Again for definiteness, a suitable 
cut-off function has be to used far away from the origin. More precisely, 
(\ref{gamma_factor}) is defined in $\RR^n$, with $|x|$ being the Euclidean distance to the origin,
and we are considering the conformally flat metric
\begin{gather}
 g_\ga = g = e^{2\varphi} \de = \frac{1}{(a^2 + |x|^2)^\ga} \de,
\label{g_gamma}
\end{gather}
with $\de$ being the Euclidean metric. Choosing an appropriate cut-off function for $|x|$ large, this metric 
can be glued to the manifold $(M, g_0)$, as indicated in \cite{Do}.
\begin{rema}
With the above gluing argument implicitly understood, we will work in $\RR^n$
for simplicity. In other words, we will
explore the behavior of $g$ in $\RR^n$, but our primary interest is in the restriction
of  all quantities to a 
fixed ball of large radius.
\label{remark_gluing}
\end{rema}
Firstly, notice that the constraint $\int_M e^{n\varphi} \, dV_g = A$ fixes $a$ in (\ref{gamma_factor}):
\begin{gather}
 \int_{\RR^n}\frac{dx}{(a^2 + |x|^2)^{\frac{n\ga}{2} } } 
= a^{-n(\ga-1)} \vol(S^{n-1}) \int_0^\infty \frac{s^{n-1}}{(1 + s^2)^{\frac{n\ga}{2} } } ds.
\nonumber
\end{gather}
The condition $\ga > 1$ ensures the convergence of the integral, so that
\begin{gather}
 a = a (\ga) = 
\left( \frac{  \vol(S^{n-1}) \int_0^\infty \frac{s^{n-1}}{(1 + s^2)^{\frac{n\ga}{2} } } ds }{ A } \right)^\frac{1}{n(\ga-1)}.
\label{a_gamma}
\end{gather}
Using (\ref{scalar_transformation}) one finds
\begin{gather}
 R_g(r) = (a^2 + r^2)^\ga \Big [ 2(n-1)r^{1-n} \Big ( \frac{ \ga n r^{n-1} }{ a^2 + r^2 } 
- \frac{ 2 \ga r^{n+1} }{ (a^2 + r^2)^2 } \Big )
- \frac{ \ga^2 ( n - 2 )(n-1) r^2 }{ (a^2 + r^2 )^2 } \Big ],
\label{scalar_radial}
\end{gather}
where $r = |x|$. It then follows that
\begin{gather}
 R(0) = 2(n-1)n\ga a^{2(\ga - 1)}.
\label{scalar_origin}
\end{gather}
From (\ref{a_gamma}), we see that $a \rar \infty$ when $\ga \rar 1$, and hence the scalar curvature
blows up at the origin. On the other hand, $a$ remains bounded 
when $\ga$ increases. Hence, if one fixes $\ga_0 > 1$ and considers
values of $\ga$ satisfying $\ga_0 < \ga$, the scalar curvature is bounded at the origin, and 
it is not difficult to see from (\ref{scalar_radial}) that it remains bounded on compact sets.
Therefore, given $\ga_0 > 1$ and $\rho >0$, there exists a constant $K_{\ga_0,\rho}$ such that
\begin{gather}
| R_{g_\ga} |(x) \leq K_{\ga_,\rho},
\nonumber
\end{gather}
for all $\ga > \ga_0$ and $|x| \leq \rho$, where $g_\ga$ is given by (\ref{g_gamma}).
Therefore theorem \ref{main_theorem} can be applied 
(after suitable adjustments as pointed out in remark \ref{remark_gluing}), and we
conclude that the effective potential is bounded from below for this family of metrics.

The previous example clearly resembles a well known phenomena on the round sphere,
which we now briefly recall (see \cite{LP} for details).

Identifying the complement of the north pole of $S^n$ with $\RR^n$ via stereographic 
projection, the round metric can then be written as
\begin{gather}
 g_0 = 4 u_0^\frac{4}{n-2} \de,
\nonumber
\end{gather}
where $u_0$ is the radially symmetric function
\begin{gather}
u_0(x) = (1+|x|^2)^\frac{2-n}{2},
\nonumber 
\end{gather}
which is usually referred to as the ``standard bubble''. Acting on $g_0$ with the group 
of conformal diffeomorphisms of the sphere generates the family of metrics
\begin{gather}
 g_\ve = 4 u_\ve^\frac{4}{n-2} \de, 
\nonumber
\end{gather}
where the one-parameter family of functions
\begin{gather}
u_\ve(x) = \ve^\frac{n-2}{2} (\ve^2+|x|^2)^\frac{2-n}{2},~~\ve > 0,
\nonumber 
\end{gather}
is also commonly known as the standard bubble. The presence of the parameter $\ve > 0$ is easily understood: 
the functions $u_\ve$, or equivalently the metrics $g_\ve$, come from dilations in $\RR^n$ inducing
conformal transformations on the sphere. All the metrics $g_\ve$ have the same volume, and
all of them are of constant scalar curvature equal to the round one. Hence, the volume constraint
and the bound on the scalar curvature required in theorem \ref{main_theorem} are satisfied 
for this family of metrics, and therefore the theorem can be applied. 
The important point here is that even though one restricts to constant scalar curvature metrics on the sphere,
there is an entire non-compact family of them, namely $g_\ve$, a situation where one would naturally
wonder if theorem \ref{main_theorem} holds.

The previous example admits a generalization in the following sense.
Suppose $n\geq 3$, fix a background metric $g_0$, and consider the set of $\varphi$ such that 
the metric $g = e^{2 \varphi} g_0$ has fixed volume and constant scalar curvature.
When the Yamabe invariant of $M$ is non-positive there exists a unique such $\varphi$
\cite{S2,S3,S4}. In the the positive case, however, it is possible to find a dense (in the $C^0$ topology)
subset $\mathcal{P}$ of such conformal factors satisfying the volume constraint and having 
constant scalar curvature \cite{Po}. As the scalar curvature remains unchanged for any $\varphi \in \mathcal{P}$,
the hypotheses of theorem \ref{main_theorem} are valid, and therefore the effective potential 
will be bounded from below when the infimum is taken over $\mathcal{P}$.

In the situation described in the previous paragraph, more can be said.
Due to compactness theorems for the Yamabe problem (\cite{KMS} and references therein),
the $C^{k,\al}$ norm of $e^{2 \varphi}$ will be uniformly bounded for all $\varphi \in \mathcal{P}$ 
by a constant depending only on the background 
metric $g_0$\footnote{Not to miss a subtle point, such theorems hold only when $n \leq 24$ and
the manifold is not conformally equivalent to the round sphere, and they also rely on the validity 
of the Positive Mass theorem in higher dimensions. But none of such issues arises in the cases of interest:
we have already discussed the case of the round sphere, and the Positive Mass theorem holds in dimensions
less than or equal to seven \cite{SY2, SY3}.}. Combining this with
equation (\ref{eq_u}) and Schauder estimates (see proposition \ref{schauder}) gives the following 
bound for the critical points:
\begin{gather}
 \parallel u \parallel_{C^{k,\al}(M)} \, \leq C_1 + C_2 \parallel u \parallel_{C^0(M)},
\nonumber
\end{gather}
where the constants $C_1$ and $C_2$ depend only the the background 
metric $g_0$ and $C^{0,\al}$-norms of $F_{g_0}$ and $\Ts$ (and of course on the 
dimension and on $\al$). Furthermore, if the solution $u$ is unique, then
the above estimate holds with $C_2 = 0$ (possibly after redefining $C_1$).
The important point here is that the growth of the derivatives of $u$ (and of 
$u$ itself when the solution is unique) is essentially controlled by 
the \emph{fixed} quantities $F_{g_0}$ and $\Ts$.

\appendix 

\section{Conventions and summary of formulae}

In this appendix we state some results we used in the proofs. Even though 
they are standard,
they are included here for the reader's convenience. 
The definitions and theorems here presented are far from general, but 
they will suffice to the purpose of this paper. We will comment on such generalizations, referring
 to the literature for details.
We will also indicate in this appendix our sign conventions.

\subsection{Inequalities}
Here we recall some standard inequalities. They can all be found, for example, in \cite{Evans}.\\

\noindent \textbf{Cauchy's inequality with $\ve$.}
\begin{gather}
 ab \leq \frac{\ve}{2} a^2 + \frac{1}{2\ve}b^2,~~a,b> 0,~\ve > 0
\nonumber
\end{gather}

\noindent \textbf{H\"older's inequality.} Assume $1 \leq p,q \leq \infty$, $\frac{1}{p} + \frac{1}{q} = 1$. 
Then if $u \in L^p(M)$, $v \in L^q(M)$, we have
\begin{gather}
\int_M |uv| \,dV \leq \parallel u \parallel_{L^p(M)} \parallel v \parallel_{L^q(M)}
\nonumber
\end{gather}

\noindent \textbf{Interpolation inequality.} Assume $1 \leq s \leq r \leq t \leq \infty$ and 
\begin{gather}
 \frac{1}{r} = \frac{\theta}{s} + \frac{1-\theta}{t} 
\nonumber
\end{gather}
Suppose that $ u \in L^s(M) \cap L^t(M)$. Then $u \in L^r(M)$ and 
\begin{gather}
 \parallel u \parallel_{L^r(M)} \leq \parallel u \parallel_{L^s(M)}^\theta \parallel u \parallel_{L^t(M)}^{1-\theta}
\nonumber
\end{gather}

\noindent \textbf{Jensen's inequality}. Assume $f: \RR \rar \RR$ is convex, $u: M \rar \RR$ is integrable
and $M$ has finite measure. Then
\begin{gather}
 f \Big (\dashint_M u \, dV \Big ) \leq \dashint_M f(u) \, dV 
\nonumber
\end{gather}
where $\dashint_M =\frac{1}{\vol(M)} \int_M$.

\subsection{Theorems}
Throughout this section we consider elliptic linear partial differential 
operators which, in local coordinates, are 
written as\footnote{In particular, it is also assumed that all operators are of second order.}
\begin{gather}
 P u = \partial_j (a^{ij} \partial_i u ) + b^i \partial_i u - c u,
\label{elliptic_operator}
\end{gather}
where the functions $a^{ij}$, $b^i$ and $c$ are assumed to be smooth, and manifolds are always
assumed to be compact and without boundary.

A differential operator is elliptic when its symbol is invertible \cite{Kaz, Au2}. For scalar equations,
this corresponds to saying that the matrix $a^{ij}$ in (\ref{elliptic_operator}) is positive
definite for all $x \in M$. In other words, there exists a constant $\la > 0$, called 
the \emph{ellipticity constant}, such that
\begin{gather}
 a^{ij}(x) \xi_i \xi_j \geq \la |\xi|^2
\nonumber
\end{gather}
for any $x \in M$ and tangent vectors $\xi\neq 0$ (where the above expression 
should be understood in local coordinates around $x$).
The standard model of an elliptic operator is the Laplacian of a metric $g$, $\Delta_g$. 

Recall that a linear system in finite dimensions, written as $A x = b$, has a solution if, and only 
if, $b$ is orthogonal to the kernel of $A^*$. Much of the power of elliptic theory stems from the fact that
elliptic operators enjoy a similar property:

\begin{prop} (Fredholm Alternative, \cite{Kaz, Evans, GT}). Let $M$ be a smooth Riemannian manifold and 
let $P: C^\infty(M) \rar C^\infty(M)$ be a linear elliptic differential operator, and denote by $P^*$ its formal
adjoint. Given $f \in C^\infty(M)$, the equation 
\begin{gather}
 P u = f
\label{equation_Fred}
\end{gather}
has a solution if, and only if, $f$ is $L^2$ orthogonal to the kernel of $P^*$. In particular, if $P^* = P$
then equation (\ref{equation_Fred}) always has a solution, which is unique and smooth, provided that $P$ has trivial kernel.
\label{Fredholm_alternative}
\end{prop}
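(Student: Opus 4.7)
The plan is to reduce the Fredholm alternative on the compact manifold $M$ to the classical Fredholm alternative for compact operators on a Hilbert space, following the standard elliptic-theory recipe. First I would set up the analytic framework: regard $P$ as an unbounded operator on $L^2(M)$ with domain $H^2(M)$, and note that by the smoothness of the coefficients in (\ref{elliptic_operator}) and the ellipticity assumption, $P$ is a bounded map $H^2(M)\to L^2(M)$ whose principal part $\partial_j(a^{ij}\partial_i\cdot)$ is uniformly elliptic.

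Next I would establish invertibility after a shift. By G\r{a}rding's inequality one obtains, for some $\mu_0>0$ large enough, a coercive estimate of the form
\begin{gather}
\int_M \big( a^{ij}\partial_i u\, \partial_j u + (c+\mu_0)u^2\big)\,dV \geq c_0 \|u\|_{H^1(M)}^2 \nonumber
\end{gather}
after integration by parts. Then Lax--Milgram yields a bounded inverse $T=(P-\mu_0 I)^{-1}: L^2(M)\to H^1(M)$, and local elliptic regularity upgrades its range to $H^2(M)$. Composing with the Rellich--Kondrachov compact embedding $H^2(M)\hookrightarrow L^2(M)$ shows that $T$ is compact as an operator $L^2(M)\to L^2(M)$. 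The same construction applied to $P^*$ (which is elliptic of the same form) produces the adjoint compact operator $T^*$.

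Now I would rewrite the equation $Pu=f$ as $(I-\mu_0 T)u = Tf$. Since $\mu_0 T$ is compact, the classical Fredholm alternative for compact perturbations of the identity on a Hilbert space gives: $(I-\mu_0 T)u = Tf$ is solvable iff $Tf$ is $L^2$-orthogonal to $\ker(I-\mu_0 T^*)$, and the latter kernel corresponds precisely to $\ker P^*$. A short calculation using $\langle Pu,w\rangle=\langle u,P^*w\rangle$ for $w\in\ker P^*$ shows that this orthogonality condition is equivalent to $f\perp \ker P^*$, giving the stated alternative. For the self-adjoint case with trivial kernel, the compact operator $I-\mu_0 T$ is injective and hence (by Fredholm) bijective, so solvability and uniqueness follow at once; smoothness of the solution is then a standard elliptic bootstrap, applying interior regularity iteratively to push $u$ from $H^2$ into every $H^k$ and concluding by Sobolev embedding that $u\in C^\infty(M)$.

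The main obstacle is not any single step but the correct identification $\ker(I-\mu_0 T^*) = \ker P^*$ under the $L^2$ pairing, which requires care because $T^*$ is defined via the Hilbert-space adjoint while $P^*$ is the formal (differential) adjoint; reconciling the two uses density of $C^\infty(M)$ in $H^2(M)$ and the absence of boundary terms on the compact manifold $M$. Everything else is bookkeeping within standard references such as \cite{Evans, GT, Kaz}.
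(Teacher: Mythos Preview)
The paper does not prove this proposition at all: it is stated in the appendix purely as a standard result, with only the citations \cite{Kaz, Evans, GT} in place of a proof. So there is no ``paper's own proof'' to compare against; your sketch simply fills in what the paper deliberately outsourced to the literature.

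That said, your outline is the standard textbook route (G\r{a}rding/Lax--Milgram to invert a shifted operator, Rellich--Kondrachov to get compactness of the resolvent, then the Hilbert-space Fredholm alternative for $I+\text{compact}$, followed by elliptic bootstrap), and it is correct in spirit. Two minor points of bookkeeping to tighten: first, with the sign convention of (\ref{elliptic_operator}) the bilinear form arising from $\langle -Pu,u\rangle$ is the one that is coercive after adding $\mu_0\|u\|_{L^2}^2$, so the shifted operator you actually invert is $\mu_0 I-P$ (equivalently $-(P-\mu_0 I)$); tracking this sign carefully will make your identity ``$Pu=f \Leftrightarrow (I-\mu_0 T)u=Tf$'' come out with the correct sign. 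Second, the solvability condition you obtain from the abstract Fredholm alternative is that $Tf\perp \ker(I-\mu_0 T^*)$, and you should spell out why this is equivalent to $f\perp \ker P^*$ rather than just $Tf\perp \ker P^*$; the point is that $T$ is an isomorphism onto its range and $\ker P^*$ consists of smooth functions (by elliptic regularity for $P^*$), so the pairing can be rewritten using $T^*$ and the identification $\ker(I-\mu_0 T^*)=\ker P^*$ you already flagged. None of this affects the validity of the approach.
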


Proposition \ref{Fredholm_alternative} can be generalized to include differential operators between spaces of sections 
of vector bundles over $M$, rough coefficients and situations of very low regularity. See the above references, 
\cite{Tr1} and references therein.

For self-adjoint operators, the Fredholm alternative reduces the problem of existence to that of uniqueness.
Uniqueness can be tackled with the maximum principle:

\begin{prop} (Maximum principle, \cite{Kaz, Evans, GT}) Consider the operator $P$ in (\ref{elliptic_operator})
on a domain $\Om \subseteq M$ (possibly $\Om = M$). \\
\noindent (i) Assume $ c = 0$ and $Pu \geq 0$ (resp. $\leq 0$) in $\Om$. If $u$ achieves its maximum (resp. minimum)
in the interior of $\Om$ then $u$ is constant.\\
\noindent (ii) Assume  $c \geq 0$ and $Pu \geq 0$ (resp. $\leq 0$) in $\Om$.
Then $u$ cannot achieve a non-negative maximum (resp. non-positive minimum) in the interior of $\Om$ unless it is constant. \\
\noindent (iii) Assume $c \geq 0$ with $c$ not identically zero. If $Pu =0$ in $M$, then $u \equiv 0$.
\end{prop}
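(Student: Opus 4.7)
The plan is to follow the classical route: establish a weak maximum principle for strictly subharmonic functions by linear algebra at an interior extremum, upgrade to the non-strict hypothesis by a small perturbation, promote the weak version to the strong (\emph{unless constant}) version via Hopf's boundary point lemma, and finally deduce (iii) from (ii) by exploiting compactness of $M$ together with $c\not\equiv 0$.

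For (i), in local coordinates write $Pu = a^{ij}\partial_i\partial_j u + (\partial_j a^{ij}+b^i)\partial_i u$, and observe that at an interior maximum $x_0$ one has $\nabla u(x_0)=0$ and a negative semi-definite Hessian. Positive definiteness of $(a^{ij}(x_0))$ then forces $a^{ij}\partial_i\partial_j u(x_0)\leq 0$, so $Pu(x_0)\leq 0$, immediately ruling out $Pu>0$. To reach $Pu\geq 0$, introduce in any chart a barrier $\phi=e^{\gamma x^1}$ with $\gamma$ large enough that $P\phi>0$, apply the strict case to $u+\varepsilon\phi$, and let $\varepsilon\to 0$. For (ii), the same argument adapts verbatim: at a non-negative interior maximum the additional term $-c(x_0)u(x_0)$ is non-positive and so contributes in the correct direction, preserving $Pu(x_0)\leq 0$. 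The ``unless constant'' clause in both (i) and (ii) is the strong maximum principle, which I would obtain from Hopf's boundary point lemma applied to small balls: the lemma forces the set $\{u=\max u\}$ to be open in $\Omega$, and by continuity it is also closed, hence equal to the connected component of $\Omega$ whenever it is nonempty.

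Part (iii) is then quickly extracted. Since $M$ is compact and without boundary, $u$ attains a maximum value $\mu$. If $\mu\geq 0$, then (ii) forces $u\equiv\mu$, whence $Pu=-c\mu=0$; combined with $c\not\equiv 0$ this forces $\mu=0$. If $\mu<0$, then $\min u$ is a non-positive interior minimum and the symmetric form of (ii) (applied with the reversed inequality) forces $u$ constant and negative, but then $Pu=-cu$ is non-zero wherever $c>0$, contradicting $Pu=0$; so this case cannot occur. Hence $u\equiv 0$. The step I expect to be the real obstacle is Hopf's boundary point lemma, which requires constructing a tailored annular barrier to produce the strict normal-derivative inequality; once that tool is available, the interior pointwise argument at a critical point and the $\phi$-perturbation trick are essentially linear algebra.
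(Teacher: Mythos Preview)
The paper does not prove this proposition; it is stated in the appendix as a standard result with citations to \cite{Kaz, Evans, GT}, and no argument is given. Your sketch is correct and follows exactly the classical route found in those references (the pointwise Hessian inequality at an interior extremum, the $e^{\gamma x^1}$ barrier to pass from the strict to the non-strict case, Hopf's boundary point lemma to upgrade weak to strong, and the compactness-plus-constancy argument for (iii)), so there is nothing to compare against.
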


Again, this can be generalized to include rough coefficients. Such generalizations usually involve 
replacing the previous inequalities 
by some inequality to hold in the sense of distributions. See \cite{GT, Tr2} and references therein.

\begin{prop} (Elliptic regularity). Let $u$ be a solution of $P u = f$. \\
 
\noindent (1) If $f \in C^{k,\al}(M)$, then $u \in C^{k+2,\al}(M)$. In particular $u$ is smooth 
if $f$ is smooth. \\

\noindent (2) If $f \in L^p_k(M)$, then $u \in L^p_{k+2}(M)$. 

\label{elliptic_regularity}
\end{prop}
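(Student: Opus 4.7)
The plan is to reduce to interior regularity statements for the elliptic operator $P$ on open subsets of $\RR^n$. Since $M$ is compact, one covers it with finitely many coordinate charts, picks a subordinate partition of unity $\{\chi_j\}$, and observes that each $\chi_j u$ satisfies an elliptic equation of the same type on a Euclidean domain (with right hand side $\chi_j f$ plus lower order terms in $u$ coming from commutators with $\chi_j$). Thus it suffices to establish the two regularity statements locally in $\RR^n$ and then patch.

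For (1), the standard route proceeds in two stages. First, for a constant coefficient elliptic operator on a ball $B \subset \RR^n$, one proves the interior Schauder estimate
\begin{gather}
\| u \|_{C^{k+2,\al}(B')} \leq C \big( \| f \|_{C^{k,\al}(B)} + \| u \|_{C^0(B)} \big), \quad B' \subset\subset B,
\nonumber
\end{gather}
by representing $u$ via the Newtonian potential of $f$ and exploiting homogeneity and smoothness of the fundamental solution to bound second derivatives in H\"older norms. Second, one passes to variable coefficients by freezing them at an interior point $x_0$: write $P = P_{x_0} + (P - P_{x_0})$, observe that on a sufficiently small ball around $x_0$ the perturbation $P - P_{x_0}$ is small (thanks to continuity of $a^{ij}$), and use a covering argument together with interpolation inequalities between H\"older norms to absorb the perturbative term. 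Once the case $k=0$ is established, the general case follows by bootstrapping: if $u \in C^{2,\al}$, then each first derivative $\partial_i u$ satisfies an elliptic equation whose right-hand side lies in $C^{k-1,\al}$, so the $k=0$ estimate applied iteratively produces the full statement.

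For (2), the analogous strategy works with the Calder\'on--Zygmund theorem for singular integral operators replacing the direct potential-theoretic computation: this gives the basic $L^p$ bound for constant coefficient operators, perturbation handles variable coefficients, and Nirenberg's difference-quotient method combined with the basic $L^p$ estimate yields the higher-order Sobolev bounds by induction on $k$. Globality on $M$ follows by the patching argument described above.

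The main technical subtlety is the perturbation from constant to variable coefficients, which requires the leading coefficients $a^{ij}$ to be sufficiently regular (at least H\"older continuous in case (1), and continuous with appropriate modulus in case (2)). Since the operators appearing in this paper have smooth coefficients, these conditions are trivially satisfied, and the complete proofs of both statements can be found in the standard references \cite{GT, Evans}.
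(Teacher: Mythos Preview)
Your outline is a correct sketch of the standard proof of interior elliptic regularity (localize via charts and cutoffs, establish the constant-coefficient case via potential theory or Calder\'on--Zygmund, perturb to variable coefficients by freezing, then bootstrap). However, the paper does not actually prove this proposition: it appears in the appendix, which explicitly collects standard results ``for the reader's convenience'' without proof, and the statement is simply attributed to the references \cite{Kaz, Evans, GT}. So there is no proof in the paper to compare against; your proposal supplies an argument where the paper deliberately defers to the literature.
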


Elliptic regularity says that solutions always ``gain two derivatives'' as compared to 
the right hand side term. This leads to the following useful bootstrap argument. Consider a non-linear
equation of the form
\begin{gather}
 P u = f(x, u),
\nonumber
\end{gather}
where $f: M \times \RR \rar \RR$ is smooth, and suppose that somehow one manages to produce a solution
$u$ which is in $C^{0,\al}$. Since the composition of a smooth function with a H\"older continuous function
is again H\"older continuous, we have $f(\cdot, u) \in C^{0,\al}(M)$. Hence elliptic regularity tells us 
that $u \in C^{2,\al}(M)$. But then $f(\cdot, u)$ is in fact in $C^{2,\al}(M)$, and so $u$ must 
be in $C^{4,\al}(M)$. Continuing this argument yields $u \in C^\infty(M)$. Of course, if $f$ is, say, 
only $C^k$, we can iterate this argument only up to order $k$.

\begin{prop} (Schauder estimates \cite{Kaz,GT,Au2})
  Let $M$ be a smooth Riemannian manifold and 
let $P: C^\infty(M) \rar C^\infty(M)$ be a linear elliptic differential operator.
There exist a constant $C$, depending only on the $\al$, the dimension of $M$,
the $C^{0,\al}$-norm of $a^{ij}$, $b^i$ and $c$, and on the ellipticity constant $\la$, such that
\begin{gather}
\parallel u \parallel_{C^{k+2,\al}(M)} \, 
\leq  C \big ( \parallel P u \parallel_{C^{k,\al}(M)} + \parallel u \parallel_{C^0(M)} \big )
\nonumber
\end{gather}
Moreover, if one restricts $u$ so that it is orthogonal (in $L^2$) to the kernel of $P$, 
then we can drop the $C^0$ term on the right hand side (after replacing $C$ with a new constant
depending only on the same quantities as before).
\label{schauder}
\end{prop}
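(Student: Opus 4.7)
The plan is to reduce the global estimate on $M$ to the classical Euclidean interior Schauder estimate via a partition of unity, and then deduce the refined version (with the $C^0$ term dropped on $(\ker P)^\perp$) by a compactness/contradiction argument that exploits the first estimate together with the Fredholm alternative (Proposition \ref{Fredholm_alternative}).

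First I would prove the local version: for any Euclidean ball $B_{2r}\subset\RR^n$ on which the coefficients $a^{ij},b^i,c$ have bounded $C^{k,\alpha}$ norms and $a^{ij}$ is uniformly elliptic with constant $\lambda$, a function $u\in C^{k+2,\alpha}(\overline{B_{2r}})$ satisfies
\begin{gather}
\| u\|_{C^{k+2,\alpha}(B_r)} \leq C\bigl(\|Pu\|_{C^{k,\alpha}(B_{2r})}+\|u\|_{C^0(B_{2r})}\bigr). \nonumber
\end{gather}
This is the standard interior Schauder estimate, proved by freezing the leading coefficients $a^{ij}(x_0)$ at a reference point, invoking the Hölder estimate for constant-coefficient elliptic operators (which follows from the Campanato characterization of $C^{0,\alpha}$ and the Calderón--Zygmund-type bounds for the Newtonian potential), and absorbing the lower-order and frozen-coefficient error terms using standard interpolation inequalities between $C^{k+2,\alpha}$, $C^{k,\alpha}$ and $C^0$. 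I would simply cite \cite{GT} here.

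Next, cover the compact manifold $M$ by finitely many coordinate charts $\{U_j\}$ on which the coefficients of $P$ satisfy uniform $C^{0,\alpha}$ bounds and a uniform ellipticity constant $\lambda$. Choose a subordinate smooth partition of unity $\{\chi_j\}$ and write $u=\sum_j \chi_j u$. Apply the Euclidean estimate to each $\chi_j u$ in its chart. The key algebraic identity is that $P(\chi_j u)=\chi_j Pu+[P,\chi_j]u$, where the commutator $[P,\chi_j]$ is a first-order differential operator with smooth coefficients determined by $g_0$ and the coefficients of $P$. Hence
\begin{gather}
\|\chi_j u\|_{C^{k+2,\alpha}} \leq C\bigl(\|Pu\|_{C^{k,\alpha}(M)}+\|u\|_{C^{k+1,\alpha}(M)}\bigr). \nonumber
\end{gather}
Summing over $j$ and using the interpolation inequality $\|u\|_{C^{k+1,\alpha}}\leq\varepsilon\|u\|_{C^{k+2,\alpha}}+C_\varepsilon\|u\|_{C^0}$, one absorbs the highest-order term on the right into the left, yielding the stated global estimate.

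Finally, for the refinement on $(\ker P)^\perp$, I would argue by contradiction. Suppose no uniform $C$ exists: then one finds a sequence $u_i\in (\ker P)^\perp$ with $\|u_i\|_{C^{k+2,\alpha}(M)}=1$ and $\|Pu_i\|_{C^{k,\alpha}(M)}\to 0$. By the already-proved estimate, $\|u_i\|_{C^0}$ is bounded below away from $0$. By the compact embedding $C^{k+2,\alpha}(M)\hookrightarrow C^{k+2}(M)$ (Arzelà--Ascoli, using compactness of $M$), a subsequence converges in $C^{k+2}$ to some $u_\infty$ with $Pu_\infty=0$, so $u_\infty\in\ker P$. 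Since orthogonality in $L^2$ is preserved under $C^0$ (hence $L^2$) convergence, $u_\infty\in(\ker P)^\perp$, forcing $u_\infty\equiv 0$; but this contradicts the lower bound on $\|u_i\|_{C^0}$. Hence the improved estimate holds. The main obstacle is the Euclidean local estimate itself, but since that is a standard result available off the shelf from \cite{GT}, the substantive part here is the patching argument and the compactness step, both of which are routine given the machinery already cited in the appendix.
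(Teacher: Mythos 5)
The paper does not actually prove this proposition; it appears in the appendix purely as a recalled standard result with citations to \cite{Kaz,GT,Au2}, so there is no internal proof to compare against. Your argument is correct and is exactly the textbook route those references follow: the interior Euclidean Schauder estimate (frozen coefficients, Newtonian potential, Campanato characterization), globalized by a partition of unity with the first-order commutator $[P,\chi_j]$ and an interpolation inequality to absorb the intermediate norm, and then a compactness/contradiction argument (Arzel\`a--Ascoli compact embedding $C^{k+2,\alpha}\hookrightarrow C^{k+2}$ plus passage of $L^2$-orthogonality to the limit) to drop the $C^0$ term on $(\ker P)^{\perp}$.

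One caveat, which is an imprecision in the paper's informal statement rather than a gap in your proof: the compactness argument for the refined estimate is non-constructive, and the constant it produces a priori depends on the operator $P$ through the structure of $\ker P$, not merely on $\alpha$, the dimension, the $C^{0,\alpha}$-norms of the coefficients and the ellipticity constant as the proposition claims. The kernel (in particular its dimension, which can jump under perturbation of the coefficients) is not controlled by those data alone, so one cannot straightforwardly make the constant uniform over a family of operators with only those bounds; your argument proves exactly what this method can prove. Similarly, for the order-$k$ estimate the coefficients need to be bounded in $C^{k,\alpha}$ rather than $C^{0,\alpha}$ — you correctly assume this in your local step, while the paper's statement glosses over it.
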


The next proposition is a powerful tool to produce solutions to semi-linear equations, as it can be seen, for example,
from its applications in section \ref{d_not_4}. Its weakness relies on the fact that it 
does not guarantee uniqueness of solutions (and in fact it is easy to construct examples where the 
proposition applies but solutions are not unique, see \cite{Kaz}).

\begin{prop} (Sub- and super-solutions, \cite{SY1,Kaz}) Let $M$ be a smooth Riemannian 
manifold. Consider 
the semi-linear elliptic equations
\begin{gather}
 \Delta_g u + f(x,u) = 0,
\label{sub_super_sol_prop}
\end{gather}
where $f \in C^\infty(M\times\RR)$. Suppose there there exist $\phi,~\psi \in C^2(M)$ satisfying
\begin{gather}
 \Delta_g \phi + f(x,\phi) \geq 0, \nonumber \\
\Delta_g \psi + f(x,\psi) \leq 0, \nonumber
\end{gather}
(such $\phi$ and $\psi$ are called respectively a sub-solution and super-solution for (\ref{sub_super_sol_prop})), 
and $\phi \leq \psi$. Then (\ref{sub_super_sol_prop}) has a solution $u \in C^\infty(M)$ such that $\phi \leq u \leq \psi$.
 \label{sub_super_sol}
\end{prop}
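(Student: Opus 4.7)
The plan is to solve equation (\ref{sub_super_sol_prop}) by the classical monotone iteration method. First, pick a constant $K > 0$ large enough that the map $u \mapsto f(x,u) + Ku$ is non-decreasing in $u$ for $(x,u) \in M \times [\min_M \phi, \max_M \psi]$; this is possible because $\partial_u f$ is continuous, hence bounded, on this compact set. Consider the linear operator $L = \Delta_g - K$. Since $K > 0$, the maximum principle from the appendix (applied with $c = K > 0$) shows that $L$ has trivial kernel: if $Lw = 0$ then $w$ can attain neither a non-negative maximum nor a non-positive minimum, forcing $w \equiv 0$. By proposition \ref{Fredholm_alternative}, for each smooth $h$ the equation $Lv = h$ has a unique smooth solution. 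Define $T \colon C^0(M) \to C^\infty(M)$ by $T(w) = v$, where $Lv = -f(x,w) - Kw$.

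Next, I verify that $T$ is monotone and sends $[\phi, \psi]$ into itself. For $\phi \leq w_1 \leq w_2 \leq \psi$, the choice of $K$ gives $L(T(w_1) - T(w_2)) = (f(x,w_2) + Kw_2) - (f(x,w_1) + Kw_1) \geq 0$, so by the maximum principle $T(w_1) \leq T(w_2)$. Because $\phi$ is a sub-solution, $L\phi = \Delta_g \phi - K\phi \geq -f(x,\phi) - K\phi = L(T(\phi))$, so $L(\phi - T(\phi)) \geq 0$ and hence $T(\phi) \geq \phi$; symmetrically $T(\psi) \leq \psi$. Setting $u_0 = \phi$ and $u_{n+1} = T(u_n)$ therefore produces a pointwise increasing sequence with $\phi \leq u_n \leq \psi$ for every $n$, which converges pointwise to some function $u$ satisfying $\phi \leq u \leq \psi$.

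Finally, I upgrade pointwise convergence to a strong enough topology and identify $u$ as a solution. The right-hand sides $-f(x,u_n) - Ku_n$ are uniformly bounded in $C^0(M)$ since $u_n \in [\phi,\psi]$, so the Schauder estimates (proposition \ref{schauder}) give a uniform $C^{2,\alpha}$ bound on $u_{n+1}$. The Arzel\`a--Ascoli theorem then yields a subsequence converging in $C^2$, and uniqueness of the pointwise limit forces the whole sequence to converge to $u$ in $C^2$. Passing to the limit in $L u_{n+1} = -f(x,u_n) - K u_n$ gives $\Delta_g u + f(x,u) = 0$; a standard bootstrap using elliptic regularity (proposition \ref{elliptic_regularity}) and the smoothness of $f$ then promotes $u$ to $C^\infty(M)$. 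The main obstacle I anticipate is the careful choice of $K$ and the correct application of the maximum principle to verify both monotonicity of $T$ and the interval-preservation property; once those are in hand, the passage to the limit is essentially routine.
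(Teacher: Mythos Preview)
The paper does not actually supply a proof of this proposition; it is stated in the appendix as a known result with citations to \cite{SY1,Kaz}, and no argument is given. Your monotone iteration argument is precisely the classical proof one finds in those references, so in that sense your approach matches the intended one.

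Your write-up is essentially correct, with one small technical slip worth flagging. In the last paragraph you invoke the Schauder estimate (proposition \ref{schauder}) using only a uniform $C^0$ bound on the right-hand sides $-f(x,u_n)-K u_n$. Schauder estimates require a $C^{0,\alpha}$ bound on the source term, not merely $C^0$. The standard fix is to insert one step of $L^p$ theory: since the right-hand sides are uniformly bounded in $L^\infty(M)\subset L^p(M)$ for every $p$, elliptic regularity (proposition \ref{elliptic_regularity}(2)) gives a uniform $W^{2,p}$ bound on $u_{n+1}$, hence by Sobolev embedding a uniform $C^{1,\alpha}$ bound. Then $-f(x,u_n)-K u_n$ is uniformly bounded in $C^{0,\alpha}$ (using that $f$ is smooth), and now Schauder applies to give the uniform $C^{2,\alpha}$ bound you want. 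With this interpolation step in place, the rest of your argument (Arzel\`a--Ascoli, passage to the limit, bootstrap) goes through exactly as you wrote it.
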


Proposition \ref{sub_super_sol} admits further generalizations, including 
for manifolds with boundary; see \cite{Sat,DK}.\\

\subsection{Notation and conventions} Some notation and conventions we use: \\

\noindent Laplacian: $\Delta_g = \frac{1}{\sqrt{|g|}}\partial_i(\sqrt{|g|}g^{ij}\partial_j)$.
This differs by a sign from \cite{Do}. \\

\noindent Spacetime metric: $-++\cdots+$. \\

\noindent Dimension: spacetime before compactification: $D$, spacetime after compactification: $d$, 
compact dimensions: $n$. \\

\noindent $L^p_k(M,g)$ denotes the Sobolev space of $k^{th}$ weakly differentiable functions
which belong, along with its weak derivatives, to $L^p(M,g)$, where the measure of integration
is the volume element of the metric $g$. We sometimes write $L^p_k(M)$ for 
simplicity (in fact, when $M$ is compact, Sobolev spaces defined using different metrics are equivalent, see
for example \cite{Ro}). \\

\noindent $C^{k,\al}(M)$ denotes the H\"older spaces with $k$ derivatives and H\"older exponent $\al$.

\end{document}